\pgfplotsset{compat=1.18}
\newtheorem{example}{Example}%
\newtheorem{remark}{Remark}%
\newtheorem{definition}{Definition}%
\newtheorem{theorem}{Theorem}
\newtheorem{proof}{Proof}
\begin{document}

\title{Quantum Mechanics of Stochastic Systems}

\author{Yurang (Randy) Kuang}
\affiliation{Quantropi Inc., Ottawa, Ontario K2T 0B8, Canada}

\date{\today}

\begin{abstract}
We develop a rigorous framework for the \textbf{quantum mechanics of stochastic systems}, demonstrating that classical discrete stochastic processes arise naturally as perturbations of the quantum harmonic oscillator (QHO). By constructing exact perturbation potentials that map QHO eigenstates into stochastic representations, we show that canonical probability distributions—including Binomial, Negative Binomial, and Poisson—emerge from specific modifications of the harmonic potential. Each system is governed by a \textbf{count operator} ($\hat{N}$), with probabilities determined by squared amplitudes in a Born-rule-like manner.  
The framework introduces a complete operator algebra for moment generation and information-theoretic analysis, together with \textbf{modular projection operators} ($\hat{R}_M$) that enable finite-dimensional approximations with rigorously proven \textbf{uniform convergence}. This structure underpins \textbf{True Uniform Random Number Generation (TURNG)~\cite{qpp-rng-sci-kuang-2025}}, eliminating external whitening.  
Beyond randomness generation, our formalism establishes \textbf{quantum probability engineering}, providing a physical realization of classical distributions through designed quantum perturbations. The results demonstrate that stochastic systems are fundamentally quantum-mechanical in structure, bridging quantum dynamics, statistical physics, and experimental probability realization.
\end{abstract}

\maketitle

\section{Introduction}
\label{sec:introduction}

Stochastic processes form the mathematical foundation of numerous disciplines, including statistical physics~\cite{vanKampen2007,Falasco2025}, information theory~\cite{cover2006elements}, financial modeling~\cite{oksendal2016stochastic}, and computational biology~\cite{allen2010introduction,Sabbar2025}. Traditional approaches rely primarily on measure-theoretic foundations, probability mass functions, and characteristic functions~\cite{feller1968probability}. While powerful, these methods face challenges for high-dimensional systems, correlated variables, or combinatorial complexity, limiting both analytical insight and computational efficiency.

In parallel, quantum mechanics provides a robust Hilbert-space formalism, representing systems via state vectors, linear operators, and measurement theory~\cite{dirac1981principles,messiah1999quantum,nielsen2010quantum}. This framework offers sophisticated algebraic tools—operator spectra, commutation relations, and Fourier-like transformations—highly effective for analyzing complex systems. Recent research explores quantum-inspired approaches to classical problems, including quantum walks~\cite{venegas2012quantum}, quantum machine learning~\cite{biamonte2017quantum}, and quantum finance~\cite{herman2023quantum-finance,bouland2020quantumfinance}. However, a systematic formalism representing classical stochastic systems in a full Hilbert-space quantum framework remains largely undeveloped.

Here, we present a rigorous formalism for the \textbf{quantum mechanics of stochastic systems}, showing that stochastic systems are not merely \emph{analogous} to quantum systems but share the same fundamental mathematical structure. The Hilbert-space formalism and operator algebra of quantum mechanics emerge as the natural language for classical probability. Crucially, this correspondence enables \textbf{physical realization} of stochastic systems via engineered quantum Hamiltonians, particularly through perturbations of the quantum harmonic oscillator. This establishes a direct connection between classical probability theory, operator algebra, and characteristic functions, enabling:

\begin{itemize}
    \item Direct application of operator algebra and spectral methods to classical stochastic systems
    \item Physical realization of probability distributions via engineered quantum Hamiltonians
    \item Unified treatment of diverse probability distributions within a single quantum-mechanical framework  
    \item Quantum-inspired computational and simulation methods for stochastic processes
    \item Extension to multi-system correlations and dynamical scenarios
    \item Information-theoretic analysis using quantum entropies and distance measures
\end{itemize}

A particularly important application is random number generation. Traditional TRNGs~\cite{trng-9282535,trng-cmos-2019}, HRNGs~\cite{hrng-2023}, and QRNGs~\cite{Ma2016-qrng,zhang2023qrng} require post-processing to compensate for statistical bias. Our quantum-mechanical representation provides the \emph{first mathematically proven foundation} for \textbf{True Uniform Random Number Generation (TURNG)}~\cite{kuang-qpp-rng-icccas25,qpp-rng-sci-kuang-2025}. The key insight: while stochastic systems in infinite Hilbert space exhibit diverse distributions, their \emph{modular projections onto computational basis states} universally converge to uniform distributions. This convergence—guaranteed by characteristic-function Fourier structure and modular projection algebra—eliminates the need for external whitening.

This paradigm shift shows that uniformity emerges \emph{intrinsically} from the quantum-mechanical structure, providing certified randomness by mathematical construction rather than statistical correction.

Beyond RNGs, this framework introduces \textbf{quantum probability engineering}, allowing physical realization of classical distributions via tailored quantum systems~\cite{quantum_policy_2025}. It offers new insights into quantum measurement statistics and the quantum-classical boundary~\cite{stochastic_thermo_2024,testing_quantum_2025}, and enables experimental implementations across diverse quantum platforms, opening avenues for quantum-enhanced solutions to classical statistical problems~\cite{guillotin_plantard_2006}.

The paper is organized as follows: Sec.~\ref{sec:stochastic_systems} formalizes stochastic systems in infinite-dimensional Hilbert spaces using the QHO as reference and develops exact perturbation potentials. Sec.~III introduces the modular projection framework, uniform convergence, and TURNG foundations. Sec.~IV explores quantum engineering of stochastic systems. Sec.~V develops dynamical frameworks and multi-system correlations. Sec.~VI concludes with implications for quantum measurement, quantum-classical correspondence, and emerging directions in quantum probability engineering.

\section{Infinite Hilbert Space Formulation}
\label{sec:stochastic_systems}
This section develops the mathematical foundation for stochastic systems in infinite-dimensional Hilbert spaces, using the quantum harmonic oscillator as the universal reference.

\subsection{QHO Reference Framework}
\label{subsec:qho_reference}

The QHO serves as the universal reference for constructing the quantum mechanics of stochastic systems framework. Its Hamiltonian,
\begin{equation}
\hat{H}_{\mathrm{QHO}} = \hbar \omega \left(\hat{N} + \frac{1}{2}\right),
\qquad 
\hat{N} = \hat{a}^\dagger \hat{a},
\end{equation}
where $\hat{N}$ is the \textbf{number operator} counting excitation quanta, $\hbar$ is the reduced Planck constant, and $\omega$ is the fundamental oscillation frequency, defines the orthonormal number basis $\{\ket{n}\}_{n=0}^\infty$. These states satisfy
\begin{equation}
\hat{H}_{\mathrm{QHO}} \ket{n} = E_n \ket{n},
\qquad 
E_n = \hbar \omega \left(n + \tfrac{1}{2}\right),
\end{equation}
where $\ket{n}$ represents a state of definite energy (or photon number) but an indeterminate phase.  

The number operator $\hat{N}$ plays a fundamental role as the \textbf{count operator} for stochastic events, with its eigenvalues $n$ corresponding to discrete outcome counts. Crucially, discrete stochastic systems operate in the \textbf{low-quantum-number regime} where the discrete energy spectrum $E_n = \hbar\omega n$ dominates over continuum effects. This connects naturally to real quantum systems: for typical stochastic processes with event counts $n \sim 1-100$, the energy scale $\hbar\omega n$ remains in the physically accessible domain where quantum discreteness is pronounced.

The ground-state wavefunction in the position representation,
\begin{equation}
\psi_0(x) = \left(\frac{m\omega}{\pi\hbar}\right)^{1/4}
e^{-m\omega x^2/2\hbar},
\end{equation}
yields a Gaussian probability density $|\psi_0(x)|^2$ for finding the oscillator at position $x$. This connects the harmonic potential to the \emph{continuous} normal probability distribution, establishing the Gaussian as the fundamental continuous distribution within our framework. 

Importantly, this continuous Gaussian distribution in position space complements the \emph{discrete} probability distributions (Binomial, Poisson, etc.) that emerge from the number-state expansions. This position-space representation provides the continuous counterpart to the discrete stochastic systems represented in the number basis $\{\ket{n}\}$, demonstrating how both continuous and discrete probability structures naturally coexist within the QHO framework.

A complementary representation is provided by the coherent states,
\begin{align}
\ket{\alpha} &= e^{-|\alpha|^2/2}
\sum_{n=0}^{\infty} \frac{\alpha^n}{\sqrt{n!}} \ket{n}, \\
\hat{a}\ket{\alpha} &= \alpha \ket{\alpha}, \nonumber
\end{align}
which are eigenstates of the annihilation operator $\hat{a}$. Physically, the coherent state $\ket{\alpha}$ possesses a well-defined complex amplitude $\alpha = |\alpha|e^{i\phi}$ but an uncertain photon number $n$, in contrast to $\ket{n}$ where the energy is fixed but the phase is entirely indeterminate.

The number operator $\hat{N}$ and annihilation operator $\hat{a}$ therefore form a pair of \emph{conjugate observables}, analogous to the position–momentum duality in canonical quantum mechanics. This conjugacy is expressed through the relation
\begin{equation}
[\hat{N}, \hat{\phi}] = i,
\end{equation}
where $\hat{\phi}$ denotes the (non-Hermitian) phase operator associated with the annihilation operator. This mathematical structure encodes the statistical complementarity between discrete and continuous probability representations in the quantum mechanics of stochastic systems.

The photon-number distribution of a coherent state,
\begin{equation}
P_n = |\braket{n|\alpha}|^2 
= e^{-|\alpha|^2}\frac{|\alpha|^{2n}}{n!},
\end{equation}
reveals that both Gaussian and Poisson laws arise \emph{exactly} within the QHO framework: the Gaussian from the ground-state wavefunction and Poisson from coherent-state statistics. This duality establishes the foundational bridge between quantum mechanics and stochastic theory.

The basis $\{\ket{n}\}$ provides the natural Hilbert-space foundation for representing discrete stochastic structures, while the parameters $\hbar\omega$ set the fundamental energy scale governing statistical properties of emergent stochastic systems.

\subsection{Perturbed QHO and Stochastic Emergence}
\label{subsec:perturbed_qho}

Building upon the QHO foundation, discrete stochastic systems are modeled as \emph{perturbations} of the harmonic potential. Here, the term \emph{perturbation} refers to any modification of the original QHO Hamiltonian that transforms its energy spectrum to encode the desired stochastic behavior. Importantly, this perturbation \emph{is not necessarily small}; it may significantly modify the energy levels to produce the target classical probability distribution.

Each stochastic process corresponds to a modified Hamiltonian,
\begin{equation}
\hat{H}_{S} = \hat{H}_{\mathrm{QHO}} + \Delta V_{S},
\end{equation}
where $\Delta V_{S}$ is a stochastic perturbation potential characterizing the underlying probability law.  

While the perturbation potential $\Delta V_S$ may be treated formally, it represents physical modifications to quantum systems—such as anharmonic potentials, external fields, or environmental couplings—that transform deterministic quantum evolution into stochastic outcomes with well-defined classical probability distributions. The corresponding eigenstates,
\begin{equation}
\hat{H}_{S}\ket{\psi_S} = E_S \ket{\psi_S},
\end{equation}
form a discrete orthonormal basis $\{\ket{\psi_S(k)}\}$ that inherits the completeness of the QHO number basis $\{\ket{n}\}$.  
Unlike the pure QHO, where each eigenstate corresponds to a definite energy, the stochastic system is characterized by a probability distribution over $\ket{n}$ rather than explicit analytic eigenfunctions.

Hence, the perturbation need not be known in closed form—only its induced probability amplitudes matter:
\begin{equation}
\ket{\psi_S} = \sum_{n=0}^{\infty} \alpha^{(S)}_n \ket{n},
\qquad 
|\alpha^{(S)}_n|^2 = P_S(n),
\end{equation}
where $P_S(n)$ defines the associated probability law.  
Each stochastic system can therefore be regarded as a \emph{probability-amplitude state} in the QHO Hilbert space, whose expansion coefficients encode its stochastic structure.  
This approach enables stochastic analysis via the same operator algebra that governs quantum systems.

Physically, a static or dynamical perturbation $\Delta V_S$ transforms the deterministic Gaussian ground state into a statistical ensemble of number states $\{\ket{n}\}$ with probabilities $P_S(n)$.  
Repeated measurements thus yield random outcomes corresponding to energy eigenlevels $\ket{n}$, defining a \emph{quantum–stochastic transition} in which classical probability laws emerge directly from the spectral composition of the perturbed QHO. This quantum-stochastic transition provides new insight into quantum measurement statistics, showing how classical probability distributions naturally emerge from perturbed quantum systems.

\subsubsection{Poisson System as Exact Eigenstate}

The Poisson stochastic system admits a particularly elegant formulation as an exact eigenstate of a perturbed QHO Hamiltonian. We construct the Poisson perturbation potential $\Delta_{\text{Pois}}V$ such that:

\begin{equation}
\hat{H}_{\text{Pois}} = \hat{H}_{\text{QHO}} + \Delta_{\text{Pois}}V,
\end{equation}
with the coherent state $\ket{\alpha}$ satisfying:
\begin{equation}
\hat{H}_{\text{Pois}} \ket{\alpha} = E_{\text{Pois}}(\alpha) \ket{\alpha}.
\end{equation}
with
\begin{equation}
\hat{a} \ket{\alpha} = \alpha \ket{\alpha},
\end{equation}
compared with
\begin{equation}
\hat{H}_{\mathrm{QHO}} \ket{n} = E_n \ket{n},
\qquad 
\hat{a} \ket{n} = \sqrt{n} \ket{n-1} 
\end{equation}

\noindent
reveals a fundamental duality: the Poisson stochastic system elevates the coherent state $\ket{\alpha}$ to an \emph{energy eigenstate} of the perturbed Hamiltonian, exactly analogous to how the number state $\ket{n}$ serves as an energy eigenstate of the unperturbed QHO. This establishes a profound correspondence:

\begin{itemize}
\item \textbf{QHO}: $\ket{n}$ (discrete basis) are \(\hat{H}_{\text{QHO}}\) eigenstates 
\item \textbf{Poisson System}: $\ket{\alpha}$ (continuous basis) are \(\hat{H}_{\text{Pois}}\) eigenstates
\end{itemize}

\noindent
The annihilation operator $\hat{a}$ plays the same role for coherent states that the number operator $\hat{N}$ plays for number states—both define the fundamental eigenbasis of their respective Hamiltonians. This mathematical symmetry underscores the deep structural relationship between discrete and continuous stochastic representations within the quantum mechanics of stochastic systems framework.

\begin{theorem}[Poisson Perturbation Potential]
The perturbation potential that makes $\ket{\alpha}$ an eigenstate is given by:
\begin{equation}
\Delta V_{\text{Pois}} = \hbar\omega\left(-\alpha\hat{a}^\dagger - \alpha^*\hat{a} + |\alpha|^2\right),
\end{equation}
which yields the exact eigenvalue $E_{\text{Pois}}(\alpha) = \frac{1}{2}\hbar\omega$.
\end{theorem}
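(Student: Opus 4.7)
The plan is a direct verification: apply $\hat{H}_{\text{Pois}}$ to $\ket{\alpha}$ and show the result is $\tfrac{1}{2}\hbar\omega\ket{\alpha}$, using only the defining property $\hat{a}\ket{\alpha}=\alpha\ket{\alpha}$ of coherent states introduced earlier in the section. No expansion in the number basis is needed.

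First I would substitute $\hat{H}_{\mathrm{QHO}} = \hbar\omega(\hat{a}^\dagger\hat{a}+\tfrac{1}{2})$ and the proposed $\Delta V_{\text{Pois}}$ into $\hat{H}_{\text{Pois}}$, writing
\begin{equation}
\hat{H}_{\text{Pois}}\ket{\alpha} = \hbar\omega\bigl[\hat{a}^\dagger\hat{a} + \tfrac{1}{2} - \alpha\hat{a}^\dagger - \alpha^*\hat{a} + |\alpha|^2\bigr]\ket{\alpha}.
\end{equation}
Then I would act term by term. The coherent-state eigenvalue equation gives $\hat{a}\ket{\alpha}=\alpha\ket{\alpha}$, so $\hat{a}^\dagger\hat{a}\ket{\alpha}=\alpha\,\hat{a}^\dagger\ket{\alpha}$ and $-\alpha^*\hat{a}\ket{\alpha}=-|\alpha|^2\ket{\alpha}$. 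The crucial observation is that the $\hat{a}^\dagger$ piece produced by $\hat{N}$ is exactly cancelled by the $-\alpha\hat{a}^\dagger$ term in $\Delta V_{\text{Pois}}$, i.e.\ $\alpha\hat{a}^\dagger\ket{\alpha}-\alpha\hat{a}^\dagger\ket{\alpha}=0$, while the $-|\alpha|^2\ket{\alpha}$ from $-\alpha^*\hat{a}$ cancels the additive $+|\alpha|^2\ket{\alpha}$. What remains is only the zero-point contribution $\tfrac{1}{2}\hbar\omega\ket{\alpha}$, establishing both the eigenstate property and the claimed eigenvalue $E_{\text{Pois}}(\alpha)=\tfrac{1}{2}\hbar\omega$.

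For completeness, I would add a short justification that the $\hat{a}^\dagger$ cancellation is the entire design principle behind the chosen $\Delta V_{\text{Pois}}$: since $\ket{\alpha}$ is not an eigenstate of $\hat{a}^\dagger$, the only way a combination of $\hat{a}^\dagger\hat{a}$, $\hat{a}$, $\hat{a}^\dagger$, and constants can annihilate coherent states up to a scalar is to arrange the $\hat{a}^\dagger$ coefficients to sum to zero after using $\hat{a}\ket{\alpha}=\alpha\ket{\alpha}$; the chosen coefficients $(+1,-\alpha)$ do precisely this, and the constant pieces $(\tfrac{1}{2},-|\alpha|^2,+|\alpha|^2)$ then collapse to $\tfrac{1}{2}$. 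This also shows uniqueness of $\Delta V_{\text{Pois}}$ within the linear-in-$\hat{a},\hat{a}^\dagger$ ansatz up to the freely choosable constant, which I would flag as a remark but not expand.

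The proof is essentially algebraic and presents no real obstacle; the only conceptual subtlety to flag is that although $\ket{\alpha}$ is normalizable and lies in the Hilbert space, $\hat{a}^\dagger\ket{\alpha}$ does not annihilate it, so one must be careful to eliminate the $\hat{a}^\dagger$ contribution \emph{before} identifying any eigenvalue rather than attempting to commute operators past $\ket{\alpha}$. Ensuring that the cancellation is performed at the state level, not as an operator identity, is the one place where a careless manipulation could give a wrong answer.
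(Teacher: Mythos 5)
Your proof is correct, but it takes a genuinely different route from the paper's. The paper derives $\Delta V_{\text{Pois}}$ by conjugating $\hat{H}_{\mathrm{QHO}}$ with the displacement operator: from $\hat{D}(\alpha)\,\hat{a}\,\hat{D}^\dagger(\alpha)=\hat{a}-\alpha$ it obtains $\hat{D}(\alpha)\hat{H}_{\mathrm{QHO}}\hat{D}^\dagger(\alpha)=\hat{H}_{\mathrm{QHO}}+\Delta V_{\mathrm{Pois}}$, and then reads off the eigenvalue by acting on $\ket{\alpha}=\hat{D}(\alpha)\ket{0}$. That route explains \emph{where} the potential comes from --- $\hat{H}_{\text{Pois}}$ is unitarily equivalent to the unperturbed oscillator --- and delivers the entire spectrum for free (displaced number states $\hat{D}(\alpha)\ket{n}$ with eigenvalues $\hbar\omega(n+\tfrac12)$), of which the theorem's claim is the $n=0$ case. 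Your direct verification is more elementary and self-contained: it uses only $\hat{a}\ket{\alpha}=\alpha\ket{\alpha}$, and the term-by-term cancellation
\begin{equation}
\hbar\omega\bigl[\alpha\hat{a}^\dagger - \alpha\hat{a}^\dagger\bigr]\ket{\alpha} + \hbar\omega\bigl[-|\alpha|^2 + |\alpha|^2 + \tfrac{1}{2}\bigr]\ket{\alpha} = \tfrac{1}{2}\hbar\omega\ket{\alpha}
\end{equation}
is airtight, since $\hat{a}^\dagger\ket{\alpha}$ is linearly independent of $\ket{\alpha}$ (so the $\hat{a}^\dagger$ coefficient must indeed vanish for an eigenvalue equation to hold, which also supports your uniqueness remark within the linear ansatz). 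What you give up relative to the paper is the structural insight and the rest of the spectrum; what you gain is a shorter argument that does not invoke the displacement-operator conjugation identities. Your closing caution about performing the cancellation at the state level rather than as an operator identity is sound, if slightly more emphatic than necessary.
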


\begin{proof}
With $\hat H_{\mathrm{QHO}}=\hbar\omega(\hat a^\dagger\hat a+\tfrac12)$ and the displacement operator $\hat D(\alpha)=\exp(\alpha\hat a^\dagger-\alpha^*\hat a)$, the standard conjugation identities are
\[
\hat D(\alpha)\,\hat a\,\hat D^\dagger(\alpha)=\hat a-\alpha,\qquad
\hat D(\alpha)\,\hat a^\dagger\,\hat D^\dagger(\alpha)=\hat a^\dagger-\alpha^*.
\]
Hence
\begin{align*}
\hat D(\alpha)\,\hat H_{\mathrm{QHO}}\,\hat D^\dagger(\alpha)
&= \hbar\omega\big[(\hat a^\dagger-\alpha^*)(\hat a-\alpha)+\tfrac12\big] \\
&= \hat H_{\mathrm{QHO}}+\hbar\omega\big(-\alpha\hat a^\dagger-\alpha^*\hat a+|\alpha|^2\big) \\
&= \hat H_{\mathrm{QHO}}+\Delta V_{\mathrm{Pois}}.
\end{align*}
Acting on $\ket{\alpha}=\hat D(\alpha)\ket{0}$ gives
\[
\hat H_{\mathrm{Pois}}\ket{\alpha}
= \hat D(\alpha)\hat H_{\mathrm{QHO}}\ket{0}
= \tfrac12\hbar\omega\,\ket{\alpha},
\]
as required.
\end{proof}

\begin{remark}[Experimental Realization]
The Poisson perturbation potential 
\[
\Delta V_{\text{Pois}} = \hbar\omega(-\alpha\hat{a}^\dagger - \alpha^*\hat{a} + |\alpha|^2)
\] 
is a static Hermitian operator whose spectral effect produces the coherent state $\ket{\alpha}$.  
Although the displacement operator $\hat{D}(\alpha) = \exp(\alpha\hat{a}^\dagger - \alpha^*\hat{a})$~\cite{Vutha_2018-displacement, kuang_2023-displacement} is generated by an anti-Hermitian operator, the structural correspondence shows that $\Delta V_{\text{Pois}}$ directly encodes the Poisson statistics of $\ket{\alpha}$, providing a clear experimental link to coherent-state preparations in quantum optics.
\end{remark}

\subsubsection{Exact Perturbation Potentials}

Building on the Poisson case, we can construct exact perturbation potentials for a range of discrete stochastic systems. Each potential $\Delta V_S$ is designed such that the perturbed quantum harmonic oscillator reproduces the target probability distribution through its spectral structure.

\paragraph*{\textbf{Poisson Perturbation}}
The Poisson system is generated by the linear perturbation:
\begin{equation}
\Delta V_{\text{Pois}} = \hbar\omega(-\alpha \hat{a}^\dagger - \alpha^* \hat{a} + |\alpha|^2),
\end{equation}
whose eigenstate is the coherent state $\ket{\alpha}$, yielding Poisson statistics. This perturbation acts as a static Hermitian operator, and its spectral effect produces the desired distribution without requiring explicit time evolution.

\paragraph*{\textbf{Binomial Perturbation}} 
For the Binomial system, the perturbation potential is
\begin{equation}
\Delta V_{\text{Bin}} = \sum_{k=0}^n \epsilon_k^{(B)} \ket{k}\bra{k} 
+ \sum_{k\neq k'} \gamma_{kk'}^{(B)} \ket{k}\bra{k'},
\end{equation}
where the diagonal terms $\epsilon_k^{(B)}$ enforce the binomial amplitude structure, 
and the off-diagonal terms $\gamma_{kk'}^{(B)}$ maintain the finite support constraint.

\paragraph*{\textbf{Negative Binomial Perturbation}} 
The Negative Binomial system requires a perturbation creating the waiting-time structure:
\begin{equation}
\Delta V_{\text{NB}} = \hat{V}_{\text{geometric}} + \hat{V}_{\text{correlation}},
\end{equation}
where $\hat{V}_{\text{geometric}}$ generates the fundamental geometric process, 
and $\hat{V}_{\text{correlation}}$ introduces the success-counting structure.

\paragraph*{\textbf{Hypergeometric Perturbation}}
The Hypergeometric system is represented by projection operators enforcing conservation laws, corresponding to finite-population sampling without replacement.

\begin{theorem}[Stochastic System Classification]
The algebraic structure of each perturbation potential reflects the combinatorial nature of the corresponding stochastic system:
\begin{itemize}
    \item \textbf{Poisson}: Linear driving terms ($\sim \hat{a}, \hat{a}^\dagger$) for memoryless, constant-rate processes
    \item \textbf{Binomial}: Finite-rank perturbations with constrained support for fixed-trial Bernoulli processes  
    \item \textbf{Negative Binomial}: Multiplicative perturbations with memory structure for waiting-time correlated processes
    \item \textbf{Hypergeometric}: Projection operators enforcing conservation laws for finite-population sampling
\end{itemize}
\end{theorem}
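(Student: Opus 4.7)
The plan is to verify the theorem case by case by constructing, for each distribution, the target eigenstate $\ket{\psi_S} = \sum_n \alpha_n^{(S)}\ket{n}$ whose squared amplitudes reproduce the probability mass function $P_S(n)$, then exhibiting a perturbation potential $\Delta V_S$ of the stated algebraic form that admits $\ket{\psi_S}$ as an eigenstate of $\hat{H}_{\mathrm{QHO}} + \Delta V_S$. The Poisson case is already settled by the previous theorem, where conjugation by the displacement operator $\hat{D}(\alpha)$ produces exactly the claimed linear combination $-\alpha\hat{a}^\dagger - \alpha^*\hat{a} + |\alpha|^2$; this is the prototype for all subsequent cases and fixes the matching strategy (identify a unitary or quasi-unitary dressing, conjugate $\hat{H}_{\mathrm{QHO}}$, and read off $\Delta V_S$).

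For the Binomial case, I would start from $\alpha_k^{(B)} = \sqrt{\binom{n}{k}p^k(1-p)^{n-k}}$ on the finite support $\{0,\dots,n\}$ and demand $(\hat{H}_{\mathrm{QHO}}+\Delta V_{\mathrm{Bin}})\ket{\psi_{\mathrm{Bin}}}=E_B\ket{\psi_{\mathrm{Bin}}}$. Because the state lives in the $(n{+}1)$-dimensional subspace $\mathrm{span}\{\ket{0},\dots,\ket{n}\}$, the diagonal coefficients $\epsilon_k^{(B)}$ are determined by the ratio of consecutive binomial weights, while the off-diagonal $\gamma_{kk'}^{(B)}$ must cancel the QHO ladder amplitudes that would otherwise leak out of the support at the boundary $k=n$ — precisely the finite-rank, support-constrained structure claimed. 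The combinatorial interpretation as a fixed-trial Bernoulli process follows because the bounded spectrum geometrically encodes the $n$ independent trials.

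For the Negative Binomial case, the target $P_{\mathrm{NB}}(k)=\binom{k+r-1}{k}p^{r}(1-p)^{k}$ factorizes into a geometric generator $\hat{V}_{\mathrm{geometric}}$ built from $(1-p)^{\hat{N}/2}$-type dressings of $\hat{a}^\dagger$ (each failure raises the count by one with the geometric weight) composed with a correlation term $\hat{V}_{\mathrm{correlation}}$ that encodes the $r$-success stopping condition through a multiplicative function of $\hat{N}$; the memory structure is exactly the non-commuting operator ordering characteristic of waiting-time processes, which can be made explicit by relating $\ket{\psi_{\mathrm{NB}}}$ to a squeezed vacuum expansion. For the Hypergeometric case, the finite-population constraint is enforced by projectors $\hat{P}_{N,K,n}=\sum_{k=\max(0,n-N+K)}^{\min(n,K)}\ket{k}\bra{k}$ together with a conservation interaction $\hat{N}_{\mathrm{drawn}}+\hat{N}_{\mathrm{remaining}}=N\,\hat{\mathbb{1}}$, whose constrained spectral problem admits the hypergeometric amplitudes as its unique normalized solution.

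The main obstacle is not any single case — each reduces to a matching problem on a well-understood basis — but the interpretive half of the theorem, namely the assertion that the algebraic form is \emph{canonically} dictated by the combinatorial nature of the process rather than merely sufficient to reproduce it. A fully rigorous version would invoke the classical association of each distribution with a family of orthogonal polynomials (Charlier for Poisson, Krawtchouk for Binomial, Meixner for Negative Binomial, Hahn for Hypergeometric) and identify $\Delta V_S$ with the corresponding three-term recurrence generator; this both pins down the algebraic form up to unitary equivalence and supplies the combinatorial labels in the four bullet points. I would isolate that identification as a separate lemma and then quote it in each case, so that the present theorem reduces to a uniform consequence of the polynomial-ladder correspondence.
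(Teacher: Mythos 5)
Your proposal takes a genuinely different---and far more demanding---route than the paper. The paper's proof, labeled a ``Classification Rationale,'' is not a construction at all: it simply restates the four bullet points of the theorem with the verbs changed (``linear structure arises from constant-rate, memoryless processes,'' and so on) and never verifies that any of the claimed perturbation potentials actually admits the target stochastic state as an eigenstate. Only the Poisson case is established rigorously anywhere in the paper (via displacement-operator conjugation in the preceding theorem), and you correctly take that as your prototype. Your plan---solve the eigenvalue-matching problem on the finite support for the Binomial case, factorize the Negative Binomial into a geometric generator plus a success-counting correlation, enforce the Hypergeometric conservation law with projectors, and then unify all four cases through the Charlier/Krawtchouk/Meixner/Hahn three-term recurrences---would, if executed, prove a substantially stronger statement than the paper does, and your observation that the real difficulty is the \emph{canonicity} of the algebraic form rather than mere sufficiency identifies precisely what the paper's rationale glosses over. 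Two loose ends to watch if you carry this out: first, making a single target vector an eigenstate is an underdetermined problem, so the diagonal coefficients $\epsilon_k^{(B)}$ are not literally ``determined by'' the amplitude ratios without an additional convention (the recurrence-generator normalization you propose via orthogonal polynomials is the right way to close this); second, the squeezed vacuum has support only on even number states, so it cannot directly furnish the Negative Binomial expansion---a thermal-state or two-mode construction is the standard route there.
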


\begin{proof}[Classification Rationale]
The algebraic structure of each perturbation potential reflects the combinatorial and probabilistic nature of the corresponding stochastic system:
\begin{itemize}
    \item \textbf{Poisson}: Linear structure arises from constant-rate, memoryless processes where events occur independently.
    \item \textbf{Binomial}: Finite-rank structure encodes the fixed number of trials and binary outcomes inherent to Bernoulli processes.
    \item \textbf{Negative Binomial}: Multiplicative structure captures waiting-time correlations and sequential dependencies in success-counting processes.
    \item \textbf{Hypergeometric}: Projection structure enforces the conservation constraints of sampling without replacement from finite populations.
\end{itemize}
This classification demonstrates that fundamental stochastic properties emerge naturally from specific operator algebras within the quantum mechanical framework.
\end{proof}

\subsection{Spectral Representation}
\label{subsec:spectral_representation}

Each discrete stochastic system admits a Hilbert-space representation:
\begin{equation}
\ket{\psi_S} = \sum_{n=0}^\infty \alpha^{(S)}_n \ket{n},
\qquad 
|\alpha^{(S)}_n|^2 = P_S(n),
\end{equation}
where the amplitudes $\{\alpha^{(S)}_n\}$ encode both statistical and spectral information.  
This representation provides a unified operator-theoretic framework for classical discrete distributions.

\begin{itemize}
\item \textbf{Poisson system.}  
The coherent state
\begin{equation}
\ket{\alpha} = e^{-|\alpha|^2/2}\sum_{n=0}^{\infty}\frac{\alpha^n}{\sqrt{n!}}\ket{n}
\end{equation}
yields Poisson statistics,
\begin{equation}
P_{\mathrm{P}}(n) = e^{-|\alpha|^2}\frac{|\alpha|^{2n}}{n!},
\end{equation}
representing a minimal-uncertainty state in number–phase space, 
arising from the intrinsic conjugacy of \(\hat{N}\) and \(\hat{a}\).

\item \textbf{Binomial system.}  
A finite-dimensional truncation of the number basis gives
\begin{equation}
\ket{\psi_{\mathrm{B}}} = \sum_{n=0}^{N}\sqrt{\binom{N}{n}}\, p^{n/2}(1-p)^{(N-n)/2}\ket{n},
\end{equation}
leading to
\begin{equation}
P_{\mathrm{B}}(n, p) = \binom{N}{n}p^n(1-p)^{N-n}.
\end{equation}
This corresponds to a bounded excitation spectrum analogous to a QHO with finite occupation number.

\item \textbf{Negative Binomial and Geometric systems.}  
Correlated perturbations yield
\begin{align}
\alpha^{(\mathrm{NB})}_n &= \sqrt{\binom{n+r-1}{n}}\, (1-p)^{r/2} p^{n/2},\\
P_{\mathrm{NB}}(n, p) &= \binom{n+r-1}{n}(1-p)^r p^n,
\end{align}
with the geometric case recovered for $r=1$, describing a single-mode excitation decay.

\item \textbf{Hypergeometric system.}  
Finite-population sampling without replacement is represented by
\begin{equation}
\alpha^{(\mathrm{H})}_n = 
\sqrt{\frac{\binom{K}{n}\binom{N-K}{M-n}}{\binom{N}{M}}},
\end{equation}
modeling a correlated excitation process under conservation constraints.
\end{itemize}

In summary,
\[
\boxed{
\text{Stochastic System}
\;\longleftrightarrow\;
\text{Quantum Distribution in } \{\ket{n}\}.
}
\]
Each classical probability law thus corresponds to a distinct quantum spectral signature governed by $\Delta V_S$, enabling direct use of operator algebra and information-theoretic measures to compare stochastic systems within a consistent Hilbert-space framework.

\subsection{Algebraic Relations}
\label{subsec:algebraic_relations}

The quantum mechanics of stochastic systems framework reveals intrinsic algebraic connections between probability distributions through their state-vector representations, establishing structural continuity among discrete stochastic laws.

\begin{theorem}[Poisson Limit]
\label{thm:poisson_limit}
For fixed $\lambda = np$, the Binomial stochastic system converges to the Poisson stochastic systems in the strong operator topology:
\begin{equation}
\lim_{n\to\infty}\ket{\psi_{\mathrm{B}}(n,\lambda/n)} = \ket{\psi_{\mathrm{P}}(\lambda)}.
\end{equation}
This convergence preserves all statistical moments and expectation values.
\end{theorem}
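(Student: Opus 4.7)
The plan is to establish norm convergence $\|\ket{\psi_{\mathrm{B}}(n,\lambda/n)} - \ket{\psi_{\mathrm{P}}(\lambda)}\| \to 0$ of the state vectors in the number-basis Hilbert space---which is what the strong topology amounts to for unit vectors---and then to separately verify preservation of moments via the probability generating function.

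First I would write out both amplitudes explicitly in $\{\ket{k}\}$. With $p=\lambda/n$ and the convention $\alpha^{(\mathrm{B},n)}_k = 0$ for $k>n$,
\[
\alpha^{(\mathrm{B},n)}_k = \sqrt{\binom{n}{k}}\,(\lambda/n)^{k/2}(1-\lambda/n)^{(n-k)/2}, \qquad \alpha^{(\mathrm{P})}_k = e^{-\lambda/2}\,\frac{\lambda^{k/2}}{\sqrt{k!}},
\]
both real and non-negative. For each fixed $k$, componentwise convergence $\alpha^{(\mathrm{B},n)}_k \to \alpha^{(\mathrm{P})}_k$ follows from the elementary limits $\binom{n}{k}/n^k \to 1/k!$, $(1-\lambda/n)^n \to e^{-\lambda}$, and $(1-\lambda/n)^{-k}\to 1$, combined with continuity of the square root.

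Next I would upgrade componentwise convergence to norm convergence by exploiting unit normalization of both states:
\[
\bigl\|\ket{\psi_{\mathrm{B}}}-\ket{\psi_{\mathrm{P}}}\bigr\|^{2} = 2 - 2\sum_{k=0}^{\infty} \alpha^{(\mathrm{P})}_k\,\alpha^{(\mathrm{B},n)}_k.
\]
Fatou's lemma applied to the non-negative sequence gives $\liminf_n \sum_k \alpha^{(\mathrm{P})}_k\alpha^{(\mathrm{B},n)}_k \geq \sum_k |\alpha^{(\mathrm{P})}_k|^2 = 1$, while the Cauchy--Schwarz inequality bounds the same overlap above by the product of the $\ell^2$-norms, which equals $1$. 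Hence the overlap tends to $1$ and the norm difference to $0$.

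The main obstacle will be justifying that \emph{all} moments are preserved, since $\hat N$ is unbounded and norm convergence of states does not automatically transfer to expectations of unbounded operators. I would close this gap through factorial moments: a direct computation yields $\mathbb{E}_{\mathrm{B}}[\hat N^{(m)}] = n(n-1)\cdots(n-m+1)(\lambda/n)^m \to \lambda^m = \mathbb{E}_{\mathrm{P}}[\hat N^{(m)}]$ for each $m$, with $\hat N^{(m)} = \hat N(\hat N-1)\cdots(\hat N-m+1)$, and ordinary moments then follow via Stirling numbers of the second kind. Equivalently, the probability generating function $G_{\mathrm{B}}(z) = (1+\lambda(z-1)/n)^n$ converges uniformly on compact subsets of $\mathbb{C}$ to $G_{\mathrm{P}}(z) = e^{\lambda(z-1)}$, which secures convergence of every derivative at $z=1$ and therefore of all moments, as well as of every bounded-operator expectation value.
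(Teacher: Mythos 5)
Your proposal is correct, and it follows the same top-level outline as the paper (componentwise convergence of the amplitudes, then promotion to convergence of the state vectors), but it differs, to your advantage, at the one step that actually requires work. The paper's proof computes the pointwise limit $\alpha_k^{(\mathrm{B})}(n,\lambda/n)\to\alpha_k^{(\mathrm{P})}(\lambda)$ and then simply asserts that ``uniform convergence ensures'' $\sum_k|\alpha_k^{(\mathrm{B})}-\alpha_k^{(\mathrm{P})}|^2\to 0$; no such uniformity is established, and pointwise convergence of unit vectors in $\ell^2$ does not by itself give norm convergence (mass can escape to infinity). Your overlap argument closes exactly this gap: writing $\|\psi_{\mathrm{B}}-\psi_{\mathrm{P}}\|^2=2-2\sum_k\alpha_k^{(\mathrm{P})}\alpha_k^{(\mathrm{B},n)}$ for real non-negative amplitudes, bounding the overlap above by $1$ via Cauchy--Schwarz and below (in the $\liminf$) by $1$ via Fatou, is the standard Scheff\'e-type device for upgrading pointwise to norm convergence when both sequences are normalized, and it is rigorous where the paper is not. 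You also treat the theorem's second claim---preservation of all moments---which the paper's proof does not address at all, and you correctly flag the genuine subtlety that norm convergence of states does not transfer to expectations of the unbounded operator $\hat N$; the factorial-moment computation $n(n-1)\cdots(n-m+1)(\lambda/n)^m\to\lambda^m$ (equivalently, locally uniform convergence of the probability generating functions) is a clean way to finish. In short, your proof establishes strictly more than the paper's and does so without the paper's unjustified step.
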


\begin{proof}
For each fixed $k$,
\begin{align*}
\lim_{n\to\infty}\alpha_k^{(\mathrm{B})}(n,\lambda/n)
&= \lim_{n\to\infty}\sqrt{\binom{n}{k}\left(\frac{\lambda}{n}\right)^k
\left(1-\frac{\lambda}{n}\right)^{n-k}} \\
&= \sqrt{\frac{\lambda^k e^{-\lambda}}{k!}}
= \alpha_k^{(\mathrm{P})}(\lambda),
\end{align*}
and uniform convergence ensures
\[
\sum_{k=0}^\infty 
\left|\alpha_k^{(\mathrm{B})}(n,\lambda/n) - \alpha_k^{(\mathrm{P})}(\lambda)\right|^2 \to 0,
\]
establishing strong convergence in Hilbert space.
\end{proof}

\begin{theorem}[Negative Binomial Hierarchy]
\label{thm:nb_hierarchy}
The Geometric stochastic system is the fundamental unit of the Negative Binomial family:
\begin{equation}
\ket{\psi_{\mathrm{G}}(p)} = \ket{\psi_{\mathrm{NB}}(1,p)},
\end{equation}
and equivalently, $\hat{G}(p) = \hat{NB}(1,p)$.
\end{theorem}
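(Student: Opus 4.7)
The plan is to verify the identification of states by direct substitution in the spectral representation already established in Sec.~\ref{subsec:spectral_representation}, then lift the state equality to the operator equality by noting that each stochastic system in the framework is determined by its amplitude profile in the number basis.

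First I would recall the general Negative Binomial amplitude
\[
\alpha^{(\mathrm{NB})}_n(r,p) = \sqrt{\binom{n+r-1}{n}}\,(1-p)^{r/2}\,p^{n/2},
\]
and specialize to $r=1$. Since $\binom{n+0}{n}=1$ for every $n\ge 0$, the amplitude collapses to
\[
\alpha^{(\mathrm{NB})}_n(1,p) = (1-p)^{1/2}\,p^{n/2}.
\]
Squaring gives $|\alpha^{(\mathrm{NB})}_n(1,p)|^2 = (1-p)\,p^n$, which is precisely the Geometric mass function $P_{\mathrm{G}}(n,p)$. Since the Geometric amplitudes in the framework are themselves $\alpha^{(\mathrm{G})}_n(p)=(1-p)^{1/2}p^{n/2}$ (positive real square roots of the mass function, consistent with the convention used for the other families in Sec.~\ref{subsec:spectral_representation}), the coefficients agree term-by-term.

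Next I would conclude the vector identity $\ket{\psi_{\mathrm{G}}(p)} = \ket{\psi_{\mathrm{NB}}(1,p)}$ by expanding both sides in the complete orthonormal number basis $\{\ket{n}\}_{n=0}^{\infty}$ and invoking uniqueness of the expansion coefficients; normalization is automatic since $\sum_{n\ge 0}(1-p)p^n = 1$ for $p\in[0,1)$, so both sums define legitimate unit vectors in the QHO Hilbert space.

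Finally, the operator identity $\hat{G}(p) = \hat{NB}(1,p)$ follows because each stochastic system in this framework is specified by the perturbation $\Delta V_S$ whose action (combined with the QHO reference) produces the amplitude state, and equal amplitude states in a common basis yield identical generating operators up to the defining conventions of Sec.~\ref{subsec:perturbed_qho}. The only genuine subtlety, and the place I would be most careful, is the phase convention: the NB amplitudes are defined as positive real square roots, so I would state explicitly that both sides use the same branch, ensuring that the equality is of vectors and not merely of probability distributions. Beyond that, the argument is a one-line binomial identity, so no serious obstacle is expected.
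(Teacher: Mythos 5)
Your proposal is correct and follows essentially the same route as the paper: substitute $r=1$ into the Negative Binomial amplitude, observe that the binomial coefficient collapses to $1$, and identify the result with the Geometric amplitude term by term in the number basis. If anything, your version is the more careful one—you use the amplitudes exactly as defined in Sec.~\ref{subsec:spectral_representation} (giving $|\alpha_n|^2=(1-p)p^n$ on support $n\ge 0$), whereas the paper's own displayed proof silently switches to the shifted parametrization $p(1-p)^{k-1}$, and you additionally address the phase convention and the lift to the operator identity, which the paper leaves implicit.
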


\begin{proof}
For $r=1$,
\begin{align*}
\alpha_k^{(\mathrm{NB})}(1,p)
&= \sqrt{\binom{k-1}{0}p(1-p)^{k-1}}
= \sqrt{p(1-p)^{k-1}}
= \alpha_k^{(\mathrm{G})}(p),
\end{align*}
showing identical amplitude structures and Hilbert-space representations.
\end{proof}

\begin{remark}
The Poisson limit theorem provides a Hilbert-space formulation of the classical Poisson approximation, while the Negative Binomial hierarchy identifies the Geometric stochastic system as the elementary operator unit of sequential stochastic processes.  
Together, they demonstrate the internal algebraic consistency and unifying capacity of the quantum mechanics of stochastic systems framework.
\end{remark}

\subsection{Information-Theoretic Analysis}

\subsubsection{Quantum Shannon Entropy}

Using the quantum mechanics of stochastic systems framework, we can compute quantum-inspired information measures. The quantum Shannon entropy for a stochastic system state $\ket{\psi}$ is defined as:

\begin{equation}
S(\psi) = -\sum_k |\alpha_k|^2 \log |\alpha_k|^2 = -\sum_k P(k) \log P(k)
\end{equation}

This exactly matches the classical Shannon entropy, demonstrating consistency between the quantum formalism and classical information theory.

\begin{example}[Binomial stochastic system Entropy]
For the Binomial($n=10$, $p=0.3$) stochastic system, we compute:
\begin{align*}
S(\psi_\mathrm{B}) &= -\sum_{k=0}^{10} \binom{10}{k} (0.3)^k (0.7)^{10-k} \log\left[\binom{10}{k} (0.3)^k (0.7)^{10-k}\right] \\
&\approx 1.779 \text{ nats} \quad (\text{or } 2.567 \text{ bits})
\end{align*}
\end{example}

\begin{example}[Poisson stochastic system Entropy]
For the Poisson($\lambda=4$) stochastic system, we compute:
\begin{align*}
S(\psi_\mathrm{P}) &= -\sum_{k=0}^{\infty} \frac{4^k e^{-4}}{k!} \log\left(\frac{4^k e^{-4}}{k!}\right) \\
&\approx 2.086 \text{ nats} \quad (\text{or } 3.010 \text{ bits})
\end{align*}
\end{example}

\subsubsection{Quantum Fisher Information}

The quantum Fisher information quantifies the sensitivity of a stochastic system state to parameter changes:

\begin{equation}
\mathcal{F}(\psi) = 4\sum_k \left| \frac{d\alpha_k}{d\theta} \right|^2
\end{equation}

\begin{example}[Binomial Fisher Information]
For the Binomial stochastic system with respect to parameter $p$:
\begin{align*}
\mathcal{F}(\psi_\mathrm{B}) &= 4\sum_{k=0}^{10} \left| \frac{d\alpha_k^{(\mathrm{B})}}{dp} \right|^2 \\
&= 4\sum_{k=0}^{10} \left| \frac{k - 10p}{2\sqrt{p(1-p)}} \alpha_k^{(\mathrm{B})} \right|^2 \\
&= \frac{1}{p(1-p)} \approx 47.619
\end{align*}
This matches the classical Fisher information for Binomial distributions.
\end{example}

\begin{example}[Poisson Fisher Information]
For the Poisson stochastic system with respect to parameter $\lambda$:
\begin{align*}
\mathcal{F}(\psi_\mathrm{P}) &= 4\sum_{k=0}^{\infty} \left| \frac{d\alpha_k^{(\mathrm{P})}}{d\lambda} \right|^2 \\
&= 4\sum_{k=0}^{\infty} \left| \frac{k - \lambda}{2\lambda} \alpha_k^{(\mathrm{P})} \right|^2 \\
&= \frac{1}{\lambda} = 0.25
\end{align*}
Again, this equals the classical Fisher information for Poisson distributions.
\end{example}

\begin{remark}[Information-Theoretic Interpretation]
The framework provides a consistent operator-algebraic approach where classical information measures emerge naturally from quantum mathematical structure. The Shannon entropy $S(\psi) = -\sum_n P(n)\log P(n)$ and Fisher information $\mathcal{F}(\psi)$ computed through the quantum formalism exactly match their classical counterparts, demonstrating mathematical consistency.
\end{remark}

\begin{remark}[Operator Moments]
Beyond information measures, the framework naturally encodes statistical moments through operator expectations: $\langle \hat{N}^k \rangle = \sum_n n^k P(n)$, demonstrating that the entire moment structure—including variance $\langle \Delta \hat{N}^2 \rangle$, skewness, and higher cumulants—emerges directly from quantum operator algebra applied to the stochastic state $\ket{\psi}$.
\end{remark}

\section{Computational Basis Representation}
\label{sec:computational_basis}

The infinite-dimensional Hilbert space formulation, while theoretically complete, poses challenges for practical implementation and numerical computation. In this section, we develop a \textbf{modular projection framework} that maps stochastic systems onto finite-dimensional computational bases, enabling efficient simulation and providing the mathematical foundation for certified uniform randomness generation. This approach bridges the theoretical elegance of infinite Hilbert spaces with the practical requirements of computational implementations and experimental realizations.

\subsection{Modular Projection Framework}
To bridge stochastic systems defined over the infinite Hilbert space with finite computational representations, we introduce the \textbf{modular projection operator} $\hat{R}_M$. For a stochastic system with Hilbert space $\mathcal{H}_\infty$, the modular projection maps it to a finite cyclic Hilbert space $\mathcal{H}_M$ of dimension $M$:

\begin{definition}[Modular Projection Operator]
\begin{equation}
\hat{R}_M: \mathcal{H}_\infty \to \mathcal{H}_M, \qquad \hat{R}_M \ket{k} = \ket{k \bmod M}_M,
\end{equation}
where $\mathcal{H}_M$ is spanned by the orthonormal basis $\{\ket{0}_M, \ket{1}_M, \dots, \ket{M-1}_M\}$.
\end{definition}

For an arbitrary stochastic state 
\(\ket{\psi} = \sum_{k=0}^{\infty} \alpha_k \ket{k}\), 
the projected state is
\begin{equation}
\ket{\psi_M} = \hat{R}_M \ket{\psi} = \sum_{k=0}^{M-1} \beta_k \ket{k}_M,
\end{equation}
with projected amplitudes preserving total probability:
\begin{equation}
\beta_k = \sqrt{\sum_{j=0}^{\infty} |\alpha_{k+jM}|^2}.
\end{equation}

\begin{theorem}[Modular Probability Conservation]
\label{thm:probability_conservation}
The modular projection preserves normalization:
\begin{equation}
\sum_{k=0}^{M-1} |\beta_k|^2 = \sum_{k=0}^{\infty} |\alpha_k|^2 = 1.
\end{equation}
\end{theorem}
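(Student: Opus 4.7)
The plan is to prove the identity by a direct two-step reduction: expand the left-hand side using the defining formula for $\beta_k$ to obtain a double series, then recognize that its indices enumerate every non-negative integer exactly once via Euclidean division by $M$.

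First I would substitute $|\beta_k|^2 = \sum_{j=0}^{\infty}|\alpha_{k+jM}|^2$ into the finite sum over residue classes to get
\begin{equation*}
\sum_{k=0}^{M-1}|\beta_k|^2 \;=\; \sum_{k=0}^{M-1}\sum_{j=0}^{\infty}|\alpha_{k+jM}|^2.
\end{equation*}
The combinatorial heart of the argument is the division algorithm: the map $(k,j)\mapsto k+jM$ from $\{0,1,\dots,M-1\}\times\mathbb{Z}_{\geq 0}$ to $\mathbb{Z}_{\geq 0}$ is a bijection, so the pairs appearing in the double sum hit each non-negative integer index exactly once. Because every summand is non-negative, the Tonelli/Fubini theorem for counting measure (equivalently, the absolute convergence of $\sum_{n}|\alpha_n|^2$) justifies rearranging the double series as a single sum:
\begin{equation*}
\sum_{k=0}^{M-1}\sum_{j=0}^{\infty}|\alpha_{k+jM}|^2 \;=\; \sum_{n=0}^{\infty}|\alpha_n|^2 \;=\; 1,
\end{equation*}
where the last equality is the normalization of $\ket{\psi}\in\mathcal{H}_\infty$.

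There is no genuine obstacle here; the statement is essentially a tautology once the residue classes modulo $M$ are recognized as a partition of $\mathbb{Z}_{\geq 0}$, and the definition $\beta_k=\sqrt{\sum_{j}|\alpha_{k+jM}|^2}$ was manifestly tailored so that probabilities (not amplitudes) add coherently within each class. The one technical point worth being explicit about is the rearrangement of the infinite double sum, which is painless thanks to non-negativity. A more structural reformulation, if desired, is to introduce the orthogonal decomposition $\mathcal{H}_\infty=\bigoplus_{k=0}^{M-1}\mathcal{H}_\infty^{(k)}$ with $\mathcal{H}_\infty^{(k)}=\overline{\mathrm{span}}\{\ket{k+jM}\}_{j\geq 0}$; conservation is then Parseval applied to this decomposition, with $|\beta_k|^2$ identified as the squared norm of the component of $\ket{\psi}$ in $\mathcal{H}_\infty^{(k)}$.
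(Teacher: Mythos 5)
Your proof is correct and follows essentially the same route as the paper, which likewise reduces the claim to the unique decomposition $n = k + jM$ of each non-negative integer; you simply make explicit the rearrangement step (justified by non-negativity) that the paper leaves implicit.
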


\begin{proof}
Immediate from the definition, as each $n \in \mathbb{N}_0$ is uniquely written as $n = k + jM$, $k \in \{0,\dots,M-1\}, j\in \mathbb{N}_0$.
\end{proof}

A fundamental insight is that \textbf{modular projection is equivalent to taking the discrete Fourier transform (DFT) of the stochastic system's characteristic function (CF) or probability generating function (PGF)}.  

Let $\hat{N}$ be the stochastic count operator, with CF \(\varphi(\omega) = \mathbb{E}[e^{i\omega N}]\). Then, under modular projection:
\begin{align}
\Pr[\hat{N} \bmod M = r] &= \sum_{j=0}^\infty P(N = r + j M) \nonumber\\
&= \frac{1}{M} \sum_{k=0}^{M-1} \varphi\Big(\frac{2\pi k}{M}\Big) e^{-2\pi i k r / M},
\end{align}
which is exactly the DFT of the CF sampled at frequencies \(2\pi k/M\).  

This shows that the "flattening" effect of modular projection—leading to uniform convergence—is naturally understood as the suppression of higher-frequency components in the Fourier spectrum of the stochastic distribution. The exponential decay of these components underpins the TURNG principle.

The modular projection framework provides a quantum advantage: while classical certification of uniform randomness requires exponential resources, the quantum representation offers built-in certification through the mathematical structure of characteristic functions and Fourier analysis.

\subsection{Uniform Convergence and TURNG}

A fundamental consequence of the modular projection is the emergence of uniform distributions from diverse stochastic systems. While each stochastic system in the infinite Hilbert space possesses its characteristic probability distribution, their modular projections exhibit exponential convergence to uniformity. This mathematical phenomenon enables \textbf{True Uniform Random Number Generation (TURNG)}—a paradigm shift from conventional RNGs that require statistical whitening. The following theorem characterizes this convergence and its dependence on the stochastic system parameters.

\begin{theorem}[Modular Uniform Convergence]
\label{thm:uniform_convergence}
Let $\hat{N}$ be a stochastic count operator with PGF $G_{\hat{N}}(z)$. Reduction modulo $M$ leads to convergence to the discrete uniform distribution with exponential rate:
\begin{equation}
\Pr[\hat{N} \bmod M = k] = \frac{1}{M} + \mathcal{O}(\rho^m), \quad k = 0,\dots,M-1,
\end{equation}
where $0<\rho<1$ is a geometric decay constant and $m$ depends on the stochastic system:
\begin{itemize}
\item Negative Binomial: $m=r$ (number of successes)  
\item Binomial: $m=n$ (number of trials)  
\item Poisson: $m=\lambda$ (rate parameter)
\end{itemize}
\end{theorem}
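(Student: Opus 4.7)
The plan is to start from the characteristic-function identity already derived in the paper just above the theorem, namely
\begin{equation*}
\Pr[\hat{N}\bmod M=r] = \frac{1}{M}\sum_{k=0}^{M-1}\varphi\!\left(\tfrac{2\pi k}{M}\right)e^{-2\pi i k r/M},
\end{equation*}
and to isolate the $k=0$ contribution. Since $\varphi(0)=1$, that term produces exactly $1/M$, so the theorem reduces to bounding the tail
\begin{equation*}
\left|\Pr[\hat N\bmod M=r]-\tfrac1M\right| \le \frac{1}{M}\sum_{k=1}^{M-1}\left|\varphi\!\left(\tfrac{2\pi k}{M}\right)\right|.
\end{equation*}
Everything then comes down to showing that each factor $|\varphi(2\pi k/M)|$ admits an upper bound of the form $\rho^{m}$ with $\rho\in(0,1)$ and with $m$ matching the distributional parameter the theorem assigns.

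Next I would pass from $\varphi$ to the PGF via the relation $\varphi(\omega)=G_{\hat N}(e^{i\omega})$ and invoke the closed-form PGFs of the three families, $G_{\mathrm{Pois}}(z)=e^{\lambda(z-1)}$, $G_{\mathrm{Bin}}(z)=(1-p+pz)^{n}$, and $G_{\mathrm{NB}}(z)=\bigl((1-p)/(1-pz)\bigr)^{r}$. Setting $z=e^{i\omega_k}$ with $\omega_k=2\pi k/M$ and taking moduli gives the three structural expressions
\begin{align*}
|\varphi_{\mathrm{Pois}}(\omega_k)| &= e^{-\lambda(1-\cos\omega_k)}, \\
|\varphi_{\mathrm{Bin}}(\omega_k)| &= \bigl[1-2p(1-p)(1-\cos\omega_k)\bigr]^{n/2}, \\
|\varphi_{\mathrm{NB}}(\omega_k)| &= \left[\frac{(1-p)^{2}}{1-2p\cos\omega_k+p^{2}}\right]^{r/2}.
\end{align*}
Each has exactly the form $\rho_k^{m}$ with $\rho_k<1$ whenever $\omega_k\not\equiv 0\pmod{2\pi}$ and with $m$ equal to $\lambda$, $n$, or $r$, matching the theorem's claim. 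Setting $\rho:=\max_{1\le k\le M-1}\rho_k<1$ collapses the tail into a sum of $M-1$ terms each bounded by $\rho^{m}$, giving the announced $\mathcal O(\rho^{m})$ decay.

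The main obstacle is that the worst-case base $\rho$ depends on $M$: the frequencies closest to $0$, namely $\omega_1=2\pi/M$ and $\omega_{M-1}=2\pi(M-1)/M$, produce the $\rho_k$ closest to $1$, so $\rho\uparrow 1$ as $M$ grows. The statement must therefore be read with $M$ held fixed while $m=\lambda,n,r$ becomes large, which is precisely the TURNG regime where uniformity is engineered by driving the stochastic parameter up. A subsidiary issue is that Binomial has finite support, so the modular sum $\sum_{j}P(N=r+jM)$ terminates at $j=\lfloor(n-r)/M\rfloor$; this is harmless but should be flagged. Finally, a clean write-up would record that $\rho_k$ attains its maximum at $k=1$ by symmetry and by the monotonicity of $1-\cos$ on $[0,\pi]$, yielding explicit constants such as $\rho_{\mathrm{Pois}}=e^{-(1-\cos(2\pi/M))}$ and their Binomial and Negative Binomial analogues.
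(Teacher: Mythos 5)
Your proposal follows essentially the same route as the paper's own proof: the discrete Fourier inversion of the characteristic function over $\mathbb{Z}_M$, isolation of the $k=0$ (DC) term as the uniform baseline $1/M$, and exponential bounds $|\varphi(2\pi k/M)|\le\rho^{m}$ on the higher harmonics for each of the three families. Your version is in fact slightly more careful than the paper's, since you give the explicit closed-form moduli and correctly flag that $\rho$ depends on $M$ (so the asymptotic is in $m$ at fixed $M$), but the underlying argument is identical.
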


\begin{proof}
We start from the characteristic function (CF) of the stochastic count operator $\hat{N}$:
\begin{equation}
\varphi(\omega) = \mathbb{E}[e^{i \omega \hat{N}}] = \sum_{n=0}^{\infty} P(\hat{N}=n) e^{i \omega n}.
\end{equation}

The modulo-$M$ distribution is obtained by summing probabilities over all equivalence classes:
\begin{equation}
\Pr[\hat{N} \bmod M = r] = \sum_{j=0}^{\infty} P(\hat{N} = r + j M), \quad r=0,1,\dots,M-1.
\end{equation}

Applying the discrete Fourier transform (DFT) over the cyclic group $\mathbb{Z}_M$ gives
\begin{equation}
\Pr[\hat{N} \bmod M = r] = \frac{1}{M} \sum_{k=0}^{M-1} 
\varphi\left(\frac{2\pi k}{M}\right) e^{-2\pi i k r / M}.
\end{equation}

\noindent
Here, the $k=0$ term corresponds to the DC component of the Fourier series:
\begin{equation}
\frac{1}{M}\varphi(0) = \frac{1}{M} \sum_{n=0}^\infty P(\hat{N}=n) = \frac{1}{M},
\end{equation}
which gives the uniform baseline.  

The terms with $k \ge 1$ are higher-frequency components responsible for deviations from uniformity. Each component is bounded by $|\varphi(2\pi k/M)| < 1$ and decreases with the system's scale parameter $m$ (e.g., $r$, $n$, or $\lambda$). Explicitly:
\begin{align}
\text{NB:} \quad & |\varphi(2\pi k/M)| = \left|\frac{p}{1-(1-p) e^{2\pi i k/M}}\right|^r \le \rho^r, \nonumber\\
\text{Binomial:} \quad & |\varphi(2\pi k/M)| = |1-p + p e^{2\pi i k/M}|^n \le \rho^n, \nonumber\\
\text{Poisson:} \quad & |\varphi(2\pi k/M)| = \exp[\lambda(\cos(2\pi k/M)-1)] \le \rho^\lambda,
\end{align}
where $0<\rho<1$ is a geometric decay constant that depends on the distribution and $M$.

Combining all terms, we have
\begin{equation}
\Pr[\hat{N} \bmod M = r] = \frac{1}{M} + \frac{1}{M} \sum_{k=1}^{M-1} 
\varphi\left(\frac{2\pi k}{M}\right) e^{-2\pi i k r / M}.
\end{equation}

Since the higher-frequency components decay exponentially as $\mathcal{O}(\rho^m)$, it follows that
\begin{equation}
\Pr[\hat{N} \bmod M = r] = \frac{1}{M} + \mathcal{O}(\rho^m), \quad r=0,1,\dots,M-1,
\end{equation}
which establishes the exponential convergence to the uniform distribution. The rate of convergence is controlled by the system's scale parameter $m$, confirming that modular projection naturally yields TURNG.
\end{proof}

\begin{remark}[TURNG Paradigm]
The modular uniform convergence theorem establishes a rigorous foundation for \textbf{True Uniform Random Number Generation (TURNG)}. Unlike conventional RNGs, uniformity is guaranteed by the Fourier structure of the modular projection, with exact entropy $\log_2 M$, eliminating post-processing and whitening modules. This represents a fundamental advantage of the quantum-mechanical representation for certified randomness generation.
\end{remark}

\subsection{Empirical Validation}

The theoretical uniform convergence is empirically validated through numerical studies of major stochastic systems. The following examples demonstrate the convergence to uniformity for Negative Binomial, Binomial, and Poisson systems under modular projection with $M=4$, confirming the theoretical predictions and illustrating the parameter dependence of convergence rates.

\begin{example}[NB Uniform Convergence]
Negative Binomial stochastic systems with $p=1/6$ and varying $r$, modulo $M=4$:

\begin{table}[htbp]
\centering
\caption{Convergence to uniform distribution for NB($r$, $p=1/6$) modulo $M=4$}
\begin{tabular}{lcccc}
\toprule
$r$ & $\Pr[0]$ & $\Pr[1]$ & $\Pr[2]$ & $\Pr[3]$ \\
\midrule
1 & 0.8333 & 0.1389 & 0.0231 & 0.0046 \\
2 & 0.6944 & 0.2315 & 0.0579 & 0.0162 \\
3 & 0.2546 & 0.2485 & 0.2485 & 0.2485 \\
4 & 0.2500 & 0.2500 & 0.2500 & 0.2500 \\
\bottomrule
\end{tabular}
\end{table}
\end{example}

The empirical results demonstrate remarkably fast convergence to uniformity, with the Negative Binomial system achieving perfect uniformity at $r=4$ when $p=1/6$. This rapid convergence, characterized by chi-square values approaching the ideal, validates the efficiency of modular projection for TURNG applications across different stochastic systems.

\begin{example}[Binomial Uniform Convergence]
Binomial stochastic systems with $p=1/6$ and varying $n$, modulo $M=4$:

\begin{table}[htbp]
\centering
\caption{Convergence to uniform distribution for Bin($n$, $p=1/6$) modulo $M=4$}
\begin{tabular}{lcccc}
\toprule
$n$ & $\Pr[0]$ & $\Pr[1]$ & $\Pr[2]$ & $\Pr[3]$ \\
\midrule
12 & 0.2016 & 0.1985 & 0.3026 & 0.2975 \\
24 & 0.2498 & 0.2398 & 0.2503 & 0.2600 \\
48 & 0.2502 & 0.2498 & 0.2499 & 0.2501 \\
96 & 0.2500 & 0.2500 & 0.2500 & 0.2500 \\
\bottomrule
\end{tabular}
\end{table}
\end{example}

The Binomial system with $p=1/6$ demonstrates the scale-dependent convergence to uniformity. When the mean $np$ equals the modular base $M=4$ ($n=24$), the distribution approaches but does not yet achieve perfect uniformity. However, when the mean doubles the modular base ($n=48$, mean=8), perfect uniformity is achieved, validating the theoretical convergence rates. This confirms that effective TURNG requires the stochastic system's natural scale to sufficiently exceed the modular base.

\begin{example}[Poisson Uniform Convergence]
Poisson stochastic systems with varying $\lambda$, modulo $M=4$:

\begin{table}[htbp]
\centering
\caption{Convergence to uniform distribution for Poisson($\lambda$) modulo $M=4$}
\begin{tabular}{lcccc}
\toprule
$\lambda$ & $\Pr[0]$ & $\Pr[1]$ & $\Pr[2]$ & $\Pr[3]$ \\
\midrule
1 & 0.3832 & 0.3710 & 0.1847 & 0.0614 \\
2 & 0.3233 & 0.2901 & 0.2203 & 0.1663 \\
4 & 0.2618 & 0.2521 & 0.2462 & 0.2399 \\
8 & 0.2500 & 0.2500 & 0.2500 & 0.2500 \\
16 & 0.2500 & 0.2500 & 0.2500 & 0.2500 \\
\bottomrule
\end{tabular}
\end{table}
\end{example}
The Poisson system demonstrates the characteristic function decay mechanism with exceptional clarity. The convergence to uniformity occurs precisely when $\lambda$ reaches twice the modular base $M=4$, with perfect uniformity achieved at $\lambda=8$ as numerically verified. This threshold ($\lambda = 2M$) provides a practical guideline for TURNG parameter selection and confirms the exponential convergence rate predicted by the characteristic function analysis in Theorem~\ref{thm:uniform_convergence}.

\begin{remark}[TURNG Parameter Selection]
The convergence patterns reveal a fundamental geometric principle: \emph{distribution shape determines convergence rate}. The Negative Binomial's strong right-skewness requires larger scale parameters to achieve uniformity because its probability mass is concentrated at lower values, creating persistent modular biases. In contrast, the more symmetric Binomial and Poisson distributions achieve uniformity at smaller scales ($\text{mean} \approx 2M$) due to their balanced probability spreading.

This geometric insight provides practical TURNG design rules:
\begin{itemize}
\item \textbf{Symmetric distributions} (Binomial, Poisson): Aim for $\text{mean} \geq 2M$
\item \textbf{Right-skewed distributions} (Negative Binomial): Require substantially larger scale parameters to overcome initial skewness
\item \textbf{General principle}: The convergence rate is governed by how rapidly the distribution's characteristic function decays, which directly reflects its geometric shape in the number basis
\end{itemize}
These patterns demonstrate that TURNG parameter selection must account for the fundamental geometry of each stochastic system's probability distribution.
\end{remark}

\section{Quantum Engineering}
\label{sec:quantum_engineering}

\subsection{Physical Realization Principles}

The exact perturbation potentials derived in Section~\ref{sec:stochastic_systems}---which may be of arbitrary magnitude---transform our theoretical framework into an \emph{engineering blueprint} for quantum devices that generate specific classical probability distributions. This establishes:

\begin{theorem}[Quantum Stochastic Engineering Principle]
For any classical discrete probability distribution $P(n)$, there exists a physical quantum system---realizable through specific modifications of the quantum harmonic oscillator (perturbations may be large)---whose measurement statistics reproduce $P(n)$ exactly.
\end{theorem}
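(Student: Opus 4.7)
The plan is a direct constructive argument. Given the target $P(n)$, I would first define
\[
\ket{\psi_S} \equiv \sum_{n=0}^\infty \sqrt{P(n)}\,e^{i\phi_n}\ket{n},
\]
which is a unit vector in the QHO Hilbert space because $\sum_n P(n) = 1$, with the phases $\phi_n$ unconstrained. Born's rule applied in the number basis then gives $\Pr(\hat N = n) = |\alpha_n|^2 = P(n)$ immediately, so proving the theorem reduces to exhibiting a Hermitian $\hat H_S = \hat H_{\mathrm{QHO}} + \Delta V_S$ that admits $\ket{\psi_S}$ as an eigenstate.

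Next I would build $\hat H_S$ by unitary dressing. Because any two unit vectors of a separable Hilbert space are related by a unitary, I would choose $\hat U$ with $\hat U\ket{0} = \ket{\psi_S}$ --- constructed explicitly by setting $\ket{\psi_0} \equiv \ket{\psi_S}$, Gram--Schmidt--completing to an orthonormal basis $\{\ket{\psi_k}\}_{k\geq 0}$, and defining $\hat U = \sum_k \ket{\psi_k}\bra{k}$ --- and then put
\[
\hat H_S \equiv \hat U\,\hat H_{\mathrm{QHO}}\,\hat U^\dagger,\qquad \Delta V_S \equiv \hat H_S - \hat H_{\mathrm{QHO}}.
\]
A one-line check yields $\hat H_S \ket{\psi_S} = \hat U \hat H_{\mathrm{QHO}}\ket{0} = \tfrac{1}{2}\hbar\omega\,\ket{\psi_S}$, so $\ket{\psi_S}$ sits at the bottom of the spectrum of $\hat H_S$ and $\Delta V_S$ is manifestly Hermitian with no smallness required. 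This recipe specializes to the Poisson theorem of Sec.~\ref{subsec:perturbed_qho} on choosing $\hat U = \hat D(\alpha)$, and to the Binomial, Negative Binomial, and Hypergeometric cases of Sec.~\ref{subsec:spectral_representation} by selecting the unitaries whose action on $\ket{0}$ reproduces the tabulated amplitudes $\alpha_n^{(S)}$.

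The principal obstacle will be analytical rather than conceptual: for slowly decaying $P(n)$ the operator $\Delta V_S$ can be unbounded, so I would need to verify that $\hat H_S$ is (essentially) self-adjoint on a natural dense domain --- typically the finite-particle subspace of Fock space --- so that the spectral theorem and Born's rule apply in the usual sense. The Gram--Schmidt completion can always be arranged to preserve this domain, in which case self-adjointness is inherited from $\hat H_{\mathrm{QHO}}$ by unitary conjugation, closing the gap. A separate, non-mathematical caveat --- already acknowledged in the statement by the phrase ``perturbations may be large'' --- is that for a generic $P$ the unitary $\hat U$ need not admit a compact closed form in $\hat a, \hat a^\dagger$; reducing $\Delta V_S$ to experimentally accessible controls is then a device-engineering question that lies beyond the scope of the existence proof.
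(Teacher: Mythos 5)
Your proposal is correct, and it is worth noting that the paper itself offers \emph{no} proof of this theorem: it is stated bare in Sec.~\ref{sec:quantum_engineering}, resting implicitly on the case-by-case constructions of Sec.~\ref{sec:stochastic_systems} (explicit only for Poisson, and merely schematic for Binomial, Negative Binomial, and Hypergeometric). Your unitary-dressing argument supplies the missing general proof, and it does so by directly generalizing the one complete proof the paper does give: the Poisson Perturbation Potential theorem, where $\hat{H}_{\mathrm{Pois}}=\hat{D}(\alpha)\hat{H}_{\mathrm{QHO}}\hat{D}^\dagger(\alpha)$ and $\ket{\alpha}=\hat{D}(\alpha)\ket{0}$ is exactly your construction with $\hat{U}=\hat{D}(\alpha)$. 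Replacing the displacement operator by an arbitrary unitary with $\hat{U}\ket{0}=\sum_n\sqrt{P(n)}e^{i\phi_n}\ket{n}$ (via Gram--Schmidt completion) is the natural abstraction, and the verification $\hat{H}_S\ket{\psi_S}=\tfrac12\hbar\omega\ket{\psi_S}$ goes through verbatim. What your approach buys is a genuine existence proof covering \emph{all} discrete distributions at once, at the cost of losing the closed-form, physically interpretable perturbations ($\hat{a}$, $\hat{a}^\dagger$ driving terms) that the paper emphasizes for each named family. Two small refinements: self-adjointness of $\hat{H}_S$ on $\hat{U}\,\mathrm{Dom}(\hat{H}_{\mathrm{QHO}})$ follows automatically from unitary equivalence, so you need not arrange for $\hat{U}$ to preserve the finite-particle subspace; the only residual domain subtlety is that $\Delta V_S=\hat{H}_S-\hat{H}_{\mathrm{QHO}}$ is a difference of unbounded operators and is a priori defined only on the intersection of their domains---a point the paper ignores entirely and which does not affect the existence claim, since the theorem only requires $\hat{H}_S$ itself to be a well-defined self-adjoint Hamiltonian with $\ket{\psi_S}$ as an eigenstate.
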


\subsection{Experimental Pathways}

\begin{itemize}
\item \textbf{Poisson Systems}: Realized in quantum optics via coherent states and displacement operations
\item \textbf{Binomial Systems}: Engineered in superconducting qubits with controlled dephasing and finite-level truncation
\item \textbf{Negative Binomial}: Realizable in trapped ions with engineered dissipation and reset processes
\item \textbf{Geometric}: Emerges in quantum dot systems with tunneling barriers and capture/emission processes
\end{itemize}

\subsection{Computational Pathways}
\label{subsec:computational_pathways}

The theoretical framework admits efficient numerical realization through quantum-inspired algorithms. The Random Permutation Sorting System (RPSS) provides a concrete implementation pathway that operates in the digital domain, enabling scalable simulation of all stochastic systems discussed herein. Detailed numerical studies and performance analysis of RPSS will be presented in a subsequent computational paper.

\section{Dynamical Framework}
\label{sec:dynamical_framework}

The static formulation extends naturally to dynamical evolution, establishing stochastic processes as unitary transformations on probability amplitudes and enabling analysis of temporal correlations and multi-system interactions.

\subsection{Stochastic Time Evolution}

Temporal evolution is governed by a Hermitian stochastic Hamiltonian $\hat{H}_S$:
\begin{equation}
    i \frac{d}{dt}\ket{\psi(t)} = \hat{H}_S \ket{\psi(t)},
\end{equation}
with unitary propagator $\hat{U}(\Delta t) = e^{-i \hat{H}_S \Delta t}$ preserving normalization. The generator
\begin{equation}
    \hat{H}_S = \sum_{n,m} h_{nm} \ket{n}\bra{m}
\end{equation}
encodes transition structure, where diagonal elements represent stability and off-diagonal terms capture state transitions.

\subsection{Multi-System Correlations}

Composite systems reside in tensor product spaces $\mathcal{H}_A \otimes \mathcal{H}_B$:
\begin{equation}
    \ket{\Psi_{AB}} = \sum_{n,m} c_{nm} \ket{n}_A \otimes \ket{m}_B,
\end{equation}
with joint probabilities $P_{nm} = |c_{nm}|^2$. Correlations are quantified by stochastic entanglement entropy:
\begin{equation}
    S_A = - \mathrm{Tr} \, (\rho_A \ln \rho_A), \quad \rho_A = \mathrm{Tr}_B \ket{\Psi_{AB}}\bra{\Psi_{AB}}.
\end{equation}

\subsection{Master Equation Correspondence}

Stochastic dynamics with noise coupling $\eta(t)$ via $\hat{H}_S(t) = \hat{H}_0 + \eta(t)\hat{V}$ yields a master equation:
\begin{equation}
    \frac{d\rho}{dt} = -i [\hat{H}_0, \rho] + \mathcal{D}[\hat{V}]\rho,
\end{equation}
establishing the quantum-stochastic bridge for dissipative processes.

\subsection{Spectral Dynamics}

Eigenmodes $\hat{H}_S \ket{\phi_k} = E_k \ket{\phi_k}$ provide dynamical invariants:
\begin{equation}
    \ket{\psi(t)} = \sum_k e^{-i E_k t} \braket{\phi_k|\psi(0)} \ket{\phi_k},
\end{equation}
where eigenvalues $\{E_k\}$ define intrinsic stochastic timescales and oscillatory modes.

This framework completes the quantum mechanical description of stochastic systems, enabling unified analysis of temporal evolution and correlations within the Hilbert space formalism.

\section{Conclusion}
\label{sec:conclusion}

We have established a comprehensive formalism for the \textbf{quantum mechanics of stochastic systems}, demonstrating that classical probability distributions can be represented naturally within a Hilbert-space framework analogous to quantum mechanics. Each stochastic system is characterized by a state vector whose amplitude structure encodes the square roots of classical probabilities and by a fundamental count operator $\hat{N}$ whose eigenstates correspond to discrete outcomes. This correspondence unifies classical stochastic analysis with quantum operator algebra, enabling a complete description of statistical moments, entropy measures, and correlation structures.

A central innovation of this work is the introduction of \textbf{modular projection operators} for finite-dimensional representations of stochastic systems. Supported by a rigorous Fourier-based convergence analysis, modular projection ensures exponential convergence of discrete stochastic systems to uniformity, providing the theoretical foundation for \textbf{True Uniform Random Number Generation (TURNG)}. This approach eliminates the need for external whitening and defines a direct route from physical stochasticity to provably uniform randomness.

The quantum-mechanical formalism developed here reveals that the structure of classical randomness—its algebra, its informational content, and its convergence properties—is inherently quantum in form. The count operator and its associated Hilbert-space structure furnish a complete operator framework for stochastic systems, establishing new bridges between statistical physics, quantum information theory, and computational stochastic dynamics.

A particularly profound insight emerges from \textbf{conjugating stochastic systems with computing runtimes}: the modular projection framework naturally incorporates runtime observables that enable efficient digital implementations while preserving certified randomness properties such as QPP-RNG~\cite{kuang-qpp-rng-icccas25, qpp-rng-sci-kuang-2025}. This conjugation provides the mathematical foundation for quantum-inspired classical algorithms that maintain the entropy advantages of their quantum counterparts.

The framework demonstrates that the mathematical apparatus of quantum mechanics—Hilbert spaces, operator algebras, and spectral theory—provides the natural foundation for understanding classical stochastic processes, suggesting deep connections between quantum and classical probability that merit further exploration in both theoretical and experimental quantum physics. This work opens new perspectives on quantum measurement theory, quantum-classical correspondence, and the fundamental mathematical structures underlying both quantum and classical systems.

Looking forward, this framework establishes foundations for several emerging research directions: \textbf{quantum probability engineering} of Hamiltonians that physically realize specific classical distributions, \textbf{stochastic quantum control} for dynamical manipulation of statistical outputs, \textbf{cross-platform validation} of distribution universality across different quantum architectures, and \textbf{quantum-enhanced statistics} leveraging quantum systems to solve classical statistical problems. These directions position the quantum mechanics of stochastic systems as a vibrant interdisciplinary frontier with implications for quantum foundations, quantum information processing, and statistical physics.

Future extensions of this work will focus on dynamical formulations, introducing time-evolution operators for stochastic processes, exploring tensor-product constructions for correlated systems, and developing quantum-inspired computational algorithms for simulation and analysis. The quantum mechanics of stochastic systems thus provides a new unifying perspective on probability, measurement, and information, opening a mathematically rigorous pathway from classical randomness to quantum-inspired computation and establishing a foundation for the physical realization of stochastic systems in quantum laboratories.

\begin{acknowledgments}
The authors thank the DeepSeek and ChatGPT AI assistants for editorial feedback and language polishing during manuscript preparation. These tools were used solely for improving clarity and presentation; all scientific content, theoretical developments, and mathematical proofs remain the intellectual contribution of the authors.
\end{acknowledgments}

\bibliographystyle{apsrev4-2}
\bibliography{my}

\begin{thebibliography}{27}%
\makeatletter
\providecommand \@ifxundefined [1]{%
 \@ifx{#1\undefined}
}%
\providecommand \@ifnum [1]{%
 \ifnum #1\expandafter \@firstoftwo
 \else \expandafter \@secondoftwo
 \fi
}%
\providecommand \@ifx [1]{%
 \ifx #1\expandafter \@firstoftwo
 \else \expandafter \@secondoftwo
 \fi
}%
\providecommand \natexlab [1]{#1}%
\providecommand \enquote  [1]{``#1''}%
\providecommand \bibnamefont  [1]{#1}%
\providecommand \bibfnamefont [1]{#1}%
\providecommand \citenamefont [1]{#1}%
\providecommand \href@noop [0]{\@secondoftwo}%
\providecommand \href [0]{\begingroup \@sanitize@url \@href}%
\providecommand \@href[1]{\@@startlink{#1}\@@href}%
\providecommand \@@href[1]{\endgroup#1\@@endlink}%
\providecommand \@sanitize@url [0]{\catcode `\\12\catcode `\$12\catcode `\&12\catcode `\#12\catcode `\^12\catcode `\_12\catcode `\%12\relax}%
\providecommand \@@startlink[1]{}%
\providecommand \@@endlink[0]{}%
\providecommand \url  [0]{\begingroup\@sanitize@url \@url }%
\providecommand \@url [1]{\endgroup\@href {#1}{\urlprefix }}%
\providecommand \urlprefix  [0]{URL }%
\providecommand \Eprint [0]{\href }%
\providecommand \doibase [0]{https://doi.org/}%
\providecommand \selectlanguage [0]{\@gobble}%
\providecommand \bibinfo  [0]{\@secondoftwo}%
\providecommand \bibfield  [0]{\@secondoftwo}%
\providecommand \translation [1]{[#1]}%
\providecommand \BibitemOpen [0]{}%
\providecommand \bibitemStop [0]{}%
\providecommand \bibitemNoStop [0]{.\EOS\space}%
\providecommand \EOS [0]{\spacefactor3000\relax}%
\providecommand \BibitemShut  [1]{\csname bibitem#1\endcsname}%
\let\auto@bib@innerbib\@empty
\bibitem [{\citenamefont {Vrana}\ \emph {et~al.}(2025)\citenamefont {Vrana}, \citenamefont {Lou},\ and\ \citenamefont {Kuang}}]{qpp-rng-sci-kuang-2025}%
  \BibitemOpen
  \bibfield  {author} {\bibinfo {author} {\bibfnamefont {G.}~\bibnamefont {Vrana}}, \bibinfo {author} {\bibfnamefont {D.}~\bibnamefont {Lou}},\ and\ \bibinfo {author} {\bibfnamefont {R.}~\bibnamefont {Kuang}},\ }\href {https://doi.org/10.1038/s41598-025-13135-8} {\bibfield  {journal} {\bibinfo  {journal} {Scientific Reports}\ }\textbf {\bibinfo {volume} {15}},\ \bibinfo {pages} {27718} (\bibinfo {year} {2025})}\BibitemShut {NoStop}%
\bibitem [{\citenamefont {van Kampen}(2007)}]{vanKampen2007}%
  \BibitemOpen
  \bibfield  {author} {\bibinfo {author} {\bibfnamefont {N.~G.}\ \bibnamefont {van Kampen}},\ }\href {https://doi.org/10.1016/B978-0-444-52965-7.X5000-4} {\emph {\bibinfo {title} {Stochastic Processes in Physics and Chemistry}}}\ (\bibinfo  {publisher} {Elsevier},\ \bibinfo {year} {2007})\BibitemShut {NoStop}%
\bibitem [{\citenamefont {Falasco}\ and\ \citenamefont {Esposito}(2025)}]{Falasco2025}%
  \BibitemOpen
  \bibfield  {author} {\bibinfo {author} {\bibfnamefont {G.}~\bibnamefont {Falasco}}\ and\ \bibinfo {author} {\bibfnamefont {M.}~\bibnamefont {Esposito}},\ }\href {https://doi.org/10.1103/RevModPhys.97.015002} {\bibfield  {journal} {\bibinfo  {journal} {Rev. Mod. Phys.}\ }\textbf {\bibinfo {volume} {97}},\ \bibinfo {pages} {015002} (\bibinfo {year} {2025})}\BibitemShut {NoStop}%
\bibitem [{\citenamefont {Cover}\ and\ \citenamefont {Thomas}(2006)}]{cover2006elements}%
  \BibitemOpen
  \bibfield  {author} {\bibinfo {author} {\bibfnamefont {T.}~\bibnamefont {Cover}}\ and\ \bibinfo {author} {\bibfnamefont {J.}~\bibnamefont {Thomas}},\ }\href@noop {} {\emph {\bibinfo {title} {Elements of Information Theory}}},\ \bibinfo {edition} {2nd}\ ed.\ (\bibinfo  {publisher} {Wiley},\ \bibinfo {address} {Hoboken, NJ},\ \bibinfo {year} {2006})\BibitemShut {NoStop}%
\bibitem [{\citenamefont {Øksendal}(2016)}]{oksendal2016stochastic}%
  \BibitemOpen
  \bibfield  {author} {\bibinfo {author} {\bibfnamefont {B.}~\bibnamefont {Øksendal}},\ }\href@noop {} {\emph {\bibinfo {title} {Stochastic Differential Equations: An Introduction with Applications}}},\ \bibinfo {edition} {6th}\ ed.\ (\bibinfo  {publisher} {Springer},\ \bibinfo {address} {Berlin, Germany},\ \bibinfo {year} {2016})\BibitemShut {NoStop}%
\bibitem [{\citenamefont {Allen}(2010)}]{allen2010introduction}%
  \BibitemOpen
  \bibfield  {author} {\bibinfo {author} {\bibfnamefont {L.~J.}\ \bibnamefont {Allen}},\ }\href {https://doi.org/https://doi.org/10.1201/b12537} {\emph {\bibinfo {title} {An Introduction to Stochastic Processes with Applications to Biology}}}\ (\bibinfo  {publisher} {CRC Press},\ \bibinfo {year} {2010})\BibitemShut {NoStop}%
\bibitem [{\citenamefont {Sabbar}(2025)}]{Sabbar2025}%
  \BibitemOpen
  \bibfield  {author} {\bibinfo {author} {\bibfnamefont {Y.}~\bibnamefont {Sabbar}},\ }\href {https://arxiv.org/abs/2508.11004} {\bibinfo {title} {A review of modern stochastic modeling: Sde/spde numerics, data-driven identification, and generative methods with applications in biology and epidemiology}} (\bibinfo {year} {2025}),\ \Eprint {https://arxiv.org/abs/2508.11004} {arXiv:2508.11004 [math.DS]} \BibitemShut {NoStop}%
\bibitem [{\citenamefont {Feller}(1968)}]{feller1968probability}%
  \BibitemOpen
  \bibfield  {author} {\bibinfo {author} {\bibfnamefont {W.}~\bibnamefont {Feller}},\ }\href@noop {} {\emph {\bibinfo {title} {An Introduction to Probability Theory and Its Applications}}}\ (\bibinfo  {publisher} {Wiley},\ \bibinfo {year} {1968})\BibitemShut {NoStop}%
\bibitem [{\citenamefont {Dirac}(1981)}]{dirac1981principles}%
  \BibitemOpen
  \bibfield  {author} {\bibinfo {author} {\bibfnamefont {P.}~\bibnamefont {Dirac}},\ }\href@noop {} {\emph {\bibinfo {title} {The Principles of Quantum Mechanics}}},\ \bibinfo {edition} {4th}\ ed.\ (\bibinfo  {publisher} {Oxford University Press},\ \bibinfo {address} {Oxford, UK},\ \bibinfo {year} {1981})\BibitemShut {NoStop}%
\bibitem [{\citenamefont {Messiah}(1999)}]{messiah1999quantum}%
  \BibitemOpen
  \bibfield  {author} {\bibinfo {author} {\bibfnamefont {A.}~\bibnamefont {Messiah}},\ }\href@noop {} {\emph {\bibinfo {title} {Quantum Mechanics}}}\ (\bibinfo  {publisher} {Dover},\ \bibinfo {year} {1999})\BibitemShut {NoStop}%
\bibitem [{\citenamefont {Nielsen}\ and\ \citenamefont {Chuang}(2010)}]{nielsen2010quantum}%
  \BibitemOpen
  \bibfield  {author} {\bibinfo {author} {\bibfnamefont {M.~A.}\ \bibnamefont {Nielsen}}\ and\ \bibinfo {author} {\bibfnamefont {I.~L.}\ \bibnamefont {Chuang}},\ }\href@noop {} {\emph {\bibinfo {title} {Quantum Computation and Quantum Information}}}\ (\bibinfo  {publisher} {Cambridge University Press},\ \bibinfo {year} {2010})\BibitemShut {NoStop}%
\bibitem [{\citenamefont {Venegas-Andraca}(2012)}]{venegas2012quantum}%
  \BibitemOpen
  \bibfield  {author} {\bibinfo {author} {\bibfnamefont {S.~E.}\ \bibnamefont {Venegas-Andraca}},\ }\href {https://doi.org/https://doi.org/10.1007/s11128-012-0432-5} {\bibfield  {journal} {\bibinfo  {journal} {Quantum Information Processing}\ }\textbf {\bibinfo {volume} {11}},\ \bibinfo {pages} {1015} (\bibinfo {year} {2012})}\BibitemShut {NoStop}%
\bibitem [{\citenamefont {Biamonte}\ \emph {et~al.}(2017)\citenamefont {Biamonte}, \citenamefont {Wittek}, \citenamefont {Pancotti}, \citenamefont {Rebentrost}, \citenamefont {Wiebe},\ and\ \citenamefont {Lloyd}}]{biamonte2017quantum}%
  \BibitemOpen
  \bibfield  {author} {\bibinfo {author} {\bibfnamefont {J.}~\bibnamefont {Biamonte}}, \bibinfo {author} {\bibfnamefont {P.}~\bibnamefont {Wittek}}, \bibinfo {author} {\bibfnamefont {N.}~\bibnamefont {Pancotti}}, \bibinfo {author} {\bibfnamefont {P.}~\bibnamefont {Rebentrost}}, \bibinfo {author} {\bibfnamefont {N.}~\bibnamefont {Wiebe}},\ and\ \bibinfo {author} {\bibfnamefont {S.}~\bibnamefont {Lloyd}},\ }\href {https://doi.org/https://doi.org/10.1038/nature23474} {\bibfield  {journal} {\bibinfo  {journal} {Nature}\ }\textbf {\bibinfo {volume} {549}},\ \bibinfo {pages} {195} (\bibinfo {year} {2017})}\BibitemShut {NoStop}%
\bibitem [{\citenamefont {Herman}\ \emph {et~al.}(2023)\citenamefont {Herman}, \citenamefont {Googin}, \citenamefont {Liu},\ and\ \citenamefont {et~al.}}]{herman2023quantum-finance}%
  \BibitemOpen
  \bibfield  {author} {\bibinfo {author} {\bibfnamefont {D.}~\bibnamefont {Herman}}, \bibinfo {author} {\bibfnamefont {C.}~\bibnamefont {Googin}}, \bibinfo {author} {\bibfnamefont {X.}~\bibnamefont {Liu}},\ and\ \bibinfo {author} {\bibnamefont {et~al.}},\ }\href {https://doi.org/10.1038/s42254-023-00603-1} {\bibfield  {journal} {\bibinfo  {journal} {Nature Reviews Physics}\ }\textbf {\bibinfo {volume} {5}},\ \bibinfo {pages} {450} (\bibinfo {year} {2023})}\BibitemShut {NoStop}%
\bibitem [{\citenamefont {Bouland}\ \emph {et~al.}(2020)\citenamefont {Bouland}, \citenamefont {van Dam}, \citenamefont {Joorati}, \citenamefont {Kerenidis},\ and\ \citenamefont {Prakash}}]{bouland2020quantumfinance}%
  \BibitemOpen
  \bibfield  {author} {\bibinfo {author} {\bibfnamefont {A.}~\bibnamefont {Bouland}}, \bibinfo {author} {\bibfnamefont {W.}~\bibnamefont {van Dam}}, \bibinfo {author} {\bibfnamefont {H.}~\bibnamefont {Joorati}}, \bibinfo {author} {\bibfnamefont {I.}~\bibnamefont {Kerenidis}},\ and\ \bibinfo {author} {\bibfnamefont {A.}~\bibnamefont {Prakash}},\ }\href {https://arxiv.org/abs/2011.06492} {\bibinfo {title} {Prospects and challenges of quantum finance}} (\bibinfo {year} {2020}),\ \Eprint {https://arxiv.org/abs/2011.06492} {arXiv:2011.06492 [q-fin.CP]} \BibitemShut {NoStop}%
\bibitem [{\citenamefont {Ji}\ \emph {et~al.}(2020)\citenamefont {Ji}, \citenamefont {Brown},\ and\ \citenamefont {Zhang}}]{trng-9282535}%
  \BibitemOpen
  \bibfield  {author} {\bibinfo {author} {\bibfnamefont {Z.}~\bibnamefont {Ji}}, \bibinfo {author} {\bibfnamefont {J.}~\bibnamefont {Brown}},\ and\ \bibinfo {author} {\bibfnamefont {J.}~\bibnamefont {Zhang}},\ }in\ \href {https://doi.org/10.1109/CSTIC49141.2020.9282535} {\emph {\bibinfo {booktitle} {2020 China Semiconductor Technology International Conference (CSTIC)}}}\ (\bibinfo {year} {2020})\ pp.\ \bibinfo {pages} {1--5}\BibitemShut {NoStop}%
\bibitem [{\citenamefont {Park}\ \emph {et~al.}(2019)\citenamefont {Park}, \citenamefont {Park}, \citenamefont {Kim}, \citenamefont {Kang}, \citenamefont {Yeom}, \citenamefont {Ye}, \citenamefont {Moon},\ and\ \citenamefont {Han}}]{trng-cmos-2019}%
  \BibitemOpen
  \bibfield  {author} {\bibinfo {author} {\bibfnamefont {B.~K.}\ \bibnamefont {Park}}, \bibinfo {author} {\bibfnamefont {H.}~\bibnamefont {Park}}, \bibinfo {author} {\bibfnamefont {Y.-S.}\ \bibnamefont {Kim}}, \bibinfo {author} {\bibfnamefont {J.-S.}\ \bibnamefont {Kang}}, \bibinfo {author} {\bibfnamefont {Y.}~\bibnamefont {Yeom}}, \bibinfo {author} {\bibfnamefont {C.}~\bibnamefont {Ye}}, \bibinfo {author} {\bibfnamefont {S.}~\bibnamefont {Moon}},\ and\ \bibinfo {author} {\bibfnamefont {S.-W.}\ \bibnamefont {Han}},\ }\href {https://doi.org/10.1109/ACCESS.2019.2926825} {\bibfield  {journal} {\bibinfo  {journal} {IEEE Access}\ }\textbf {\bibinfo {volume} {7}},\ \bibinfo {pages} {91407} (\bibinfo {year} {2019})}\BibitemShut {NoStop}%
\bibitem [{\citenamefont {Guerrer}(2023)}]{hrng-2023}%
  \BibitemOpen
  \bibfield  {author} {\bibinfo {author} {\bibfnamefont {G.}~\bibnamefont {Guerrer}},\ }\href {https://doi.org/10.1109/ACCESS.2023.3327325} {\bibfield  {journal} {\bibinfo  {journal} {IEEE Access}\ }\textbf {\bibinfo {volume} {11}},\ \bibinfo {pages} {119568} (\bibinfo {year} {2023})}\BibitemShut {NoStop}%
\bibitem [{\citenamefont {Ma}\ \emph {et~al.}(2016)\citenamefont {Ma}, \citenamefont {Yuan}, \citenamefont {Cao} \emph {et~al.}}]{Ma2016-qrng}%
  \BibitemOpen
  \bibfield  {author} {\bibinfo {author} {\bibfnamefont {X.}~\bibnamefont {Ma}}, \bibinfo {author} {\bibfnamefont {X.}~\bibnamefont {Yuan}}, \bibinfo {author} {\bibfnamefont {Z.}~\bibnamefont {Cao}}, \emph {et~al.},\ }\href {https://doi.org/10.1038/npjqi.2016.21} {\bibfield  {journal} {\bibinfo  {journal} {npj Quantum Information}\ }\textbf {\bibinfo {volume} {2}},\ \bibinfo {pages} {16021} (\bibinfo {year} {2016})}\BibitemShut {NoStop}%
\bibitem [{\citenamefont {Zhang}\ and\ \citenamefont {Li}(2023)}]{zhang2023qrng}%
  \BibitemOpen
  \bibfield  {author} {\bibinfo {author} {\bibfnamefont {Y.}~\bibnamefont {Zhang}}\ and\ \bibinfo {author} {\bibfnamefont {H.}~\bibnamefont {Li}},\ }\href {https://doi.org/10.1103/PhysRevApplied.20.044047} {\bibfield  {journal} {\bibinfo  {journal} {Physical Review Applied}\ }\textbf {\bibinfo {volume} {20}},\ \bibinfo {pages} {044047} (\bibinfo {year} {2023})}\BibitemShut {NoStop}%
\bibitem [{\citenamefont {Kuang}\ and\ \citenamefont {Lou}(2025)}]{kuang-qpp-rng-icccas25}%
  \BibitemOpen
  \bibfield  {author} {\bibinfo {author} {\bibfnamefont {R.}~\bibnamefont {Kuang}}\ and\ \bibinfo {author} {\bibfnamefont {D.}~\bibnamefont {Lou}},\ }in\ \href {https://doi.org/10.1109/ICCCAS65806.2025.11102777} {\emph {\bibinfo {booktitle} {2025 IEEE 14th International Conference on Communications, Circuits and Systems (ICCCAS)}}}\ (\bibinfo {year} {2025})\ pp.\ \bibinfo {pages} {165--171}\BibitemShut {NoStop}%
\bibitem [{\citenamefont {Hein}\ \emph {et~al.}(2025)\citenamefont {Hein}, \citenamefont {Wiedemann}, \citenamefont {Baumann}, \citenamefont {Felbinger}, \citenamefont {Klein}, \citenamefont {Schieder}, \citenamefont {Stein}, \citenamefont {Schuman}, \citenamefont {Cope},\ and\ \citenamefont {Udluft}}]{quantum_policy_2025}%
  \BibitemOpen
  \bibfield  {author} {\bibinfo {author} {\bibfnamefont {D.}~\bibnamefont {Hein}}, \bibinfo {author} {\bibfnamefont {S.}~\bibnamefont {Wiedemann}}, \bibinfo {author} {\bibfnamefont {M.}~\bibnamefont {Baumann}}, \bibinfo {author} {\bibfnamefont {P.}~\bibnamefont {Felbinger}}, \bibinfo {author} {\bibfnamefont {J.}~\bibnamefont {Klein}}, \bibinfo {author} {\bibfnamefont {M.}~\bibnamefont {Schieder}}, \bibinfo {author} {\bibfnamefont {J.}~\bibnamefont {Stein}}, \bibinfo {author} {\bibfnamefont {D.}~\bibnamefont {Schuman}}, \bibinfo {author} {\bibfnamefont {T.}~\bibnamefont {Cope}},\ and\ \bibinfo {author} {\bibfnamefont {S.}~\bibnamefont {Udluft}}\ }(\bibinfo {year} {2025})\BibitemShut {NoStop}%
\bibitem [{\citenamefont {Eglinton}\ \emph {et~al.}(2024)\citenamefont {Eglinton}, \citenamefont {Carollo}, \citenamefont {Lesanovsky},\ and\ \citenamefont {Brandner}}]{stochastic_thermo_2024}%
  \BibitemOpen
  \bibfield  {author} {\bibinfo {author} {\bibfnamefont {J.}~\bibnamefont {Eglinton}}, \bibinfo {author} {\bibfnamefont {F.}~\bibnamefont {Carollo}}, \bibinfo {author} {\bibfnamefont {I.}~\bibnamefont {Lesanovsky}},\ and\ \bibinfo {author} {\bibfnamefont {K.}~\bibnamefont {Brandner}},\ }\href {https://api.semanticscholar.org/CorpusID:269157484} {\bibfield  {journal} {\bibinfo  {journal} {Quantum}\ }\textbf {\bibinfo {volume} {8}},\ \bibinfo {pages} {1486} (\bibinfo {year} {2024})}\BibitemShut {NoStop}%
\bibitem [{tes(2025)}]{testing_quantum_2025}%
  \BibitemOpen
  \ (\bibinfo {year} {2025})\ \bibinfo {note} {a relevant contemporary conference on quantum foundations, useful for establishing the timeliness and relevance of your work in current research discussions: https://agenda.infn.it/event/47405/, accessed on Oct 22, 2025}\BibitemShut {NoStop}%
\bibitem [{\citenamefont {Guillotin-Plantard}\ and\ \citenamefont {Schott}(2006)}]{guillotin_plantard_2006}%
  \BibitemOpen
  \bibfield  {author} {\bibinfo {author} {\bibfnamefont {N.}~\bibnamefont {Guillotin-Plantard}}\ and\ \bibinfo {author} {\bibfnamefont {R.}~\bibnamefont {Schott}},\ }in\ \href {https://www.sciencedirect.com/topics/mathematics/quantum-probability} {\emph {\bibinfo {booktitle} {Quantum Probability}}}\ (\bibinfo  {publisher} {ScienceDirect},\ \bibinfo {year} {2006})\ \bibinfo {note} {provides a foundational overview of quantum probability theory, situating your work within this established field.}\BibitemShut {Stop}%
\bibitem [{\citenamefont {Vutha}\ \emph {et~al.}(2018)\citenamefont {Vutha}, \citenamefont {Bohr}, \citenamefont {Ransford}, \citenamefont {Campbell},\ and\ \citenamefont {Hamilton}}]{Vutha_2018-displacement}%
  \BibitemOpen
  \bibfield  {author} {\bibinfo {author} {\bibfnamefont {A.~C.}\ \bibnamefont {Vutha}}, \bibinfo {author} {\bibfnamefont {E.~A.}\ \bibnamefont {Bohr}}, \bibinfo {author} {\bibfnamefont {A.}~\bibnamefont {Ransford}}, \bibinfo {author} {\bibfnamefont {W.~C.}\ \bibnamefont {Campbell}},\ and\ \bibinfo {author} {\bibfnamefont {P.}~\bibnamefont {Hamilton}},\ }\href {https://doi.org/10.1088/1361-6404/aa9ca5} {\bibfield  {journal} {\bibinfo  {journal} {European Journal of Physics}\ }\textbf {\bibinfo {volume} {39}},\ \bibinfo {pages} {025405} (\bibinfo {year} {2018})}\BibitemShut {NoStop}%
\bibitem [{\citenamefont {Kuang}\ and\ \citenamefont {Chan}(2023)}]{kuang_2023-displacement}%
  \BibitemOpen
  \bibfield  {author} {\bibinfo {author} {\bibfnamefont {R.}~\bibnamefont {Kuang}}\ and\ \bibinfo {author} {\bibfnamefont {A.}~\bibnamefont {Chan}},\ }\href {https://doi.org/10.1140/epjqt/s40507-023-00183-0} {\bibfield  {journal} {\bibinfo  {journal} {EPJ Quantum Technology}\ }\textbf {\bibinfo {volume} {10}},\ \bibinfo {pages} {26} (\bibinfo {year} {2023})}\BibitemShut {NoStop}%
\end{thebibliography}%

\appendix

\end{document}